\def\R{\mathbb R}
\def\w{\omega}
\newtheorem{theorem}{Theorem}
\newtheorem{lemma}[theorem]{Lemma}
\newtheorem{corollary}[theorem]{Corollary}
\newtheorem{fact}{Fact}
\newtheorem{rremark}{Remark}
\newenvironment{remark}{\begin{rremark} \rm}{\end{rremark}}
\begin{document}

\title[Localized standing waves in inhomogeneous Schr\"odinger equations]{Localized standing waves in inhomogeneous Schr\"odinger equations}

\author{ R.\ Marangell \& C.K.R.T.\ Jones}
\address{Department of Mathematics, University of North Carolina at Chapel Hill, \\Chapel Hill, NC 27599\\
Warwick Mathematics Institute, University of Warwick, UK}

\author{H.\ Susanto}
\address{School of Mathematical Sciences, University of Nottingham,
University Park, Nottingham, NG7 2RD, UK}

\begin{abstract}
A nonlinear Schr\"odinger equation arising from light propagation down an inhomogeneous medium is considered. The inhomogeneity is reflected through a non-uniform coefficient of the non-linear term in the equation. In particular, a combination of self-focusing and self-defocusing nonlinearity, with the self-defocusing region localized in a finite interval, is investigated. Using numerical computations, the extension of linear eigenmodes of the corresponding linearized system into nonlinear states is established, particularly nonlinear continuations of the fundamental state and the first excited state. The (in)stability of the states is also numerically calculated, from which it is obtained that symmetric nonlinear solutions become unstable beyond a critical threshold norm. Instability of the symmetric states is then investigated analytically through the application of a topological argument. Determination of instability of positive symmetric states is reduced to simple geometric properties of the composite phase plane orbit of the standing wave. Further the topological argument is applied to higher excited states and instability is again reduced to straightforward geometric calculations.  For a relatively high norm, it is observed that asymmetric states bifurcate from the symmetric ones. The stability and instability of asymmetric states is also considered.

\end{abstract}

\maketitle

\section{Introduction}

Inhomogeneities can act as an effective trapping to collective excitations in nonlinear media. In the field of nonlinear integrated optics, the first theoretical works on guided waves by an interface between a linear and nonlinear medium appeared in \cite{lomt81,agra80,toml80,mara81}. Various stationary wave profiles propagating along nonlinear planar optical guides in a layered structure are then extensively considered (see, e.g., \cite{akhm82,jens82,seat85,boar91,{steg92}} and references therein). When the inhomogeneities are periodic, one will obtain discrete waveguide arrays, which have become an independent topic of interest \cite{lede08}. A next fundamental question is whether or not the standing waves are stable to propagation along the nonlinear waveguide.

The stability of stationary nonlinear Schrodinger waves in homogeneous media was first considered by Vakhitov and Kolokolov \cite{vakh73,kolo73}. Using variational arguments, a criterion was derived relating the soliton linear stability and the slope of the corresponding power-dispersion curve, i.e.\ the known Vakhitov-Kolokolov condition. The method was later rigorously justified by Weinstein in \cite{wein85}, and in \cite{gril87}. Subsequent studies extend the condition for various situations, including inhomogeneous problems \cite{ckrtjmoloney86,tran92a,tran92b,tran92c,tran92d,mitc93} (see also a recent brief review \cite{siva08} and references therein).

An interesting waveguide system was proposed in \cite{tran92b}, consisting of a self-focusing Schr\"odinger equation and a self-defocusing type inhomogeneity with finite length. It is experimentally feasible to fabricate such a waveguide using the current technology as self-focusing and self-defocusing can be achieved in the same medium, structure, and wavelength \cite{mora01}. In the context of a Bose-Einstein condensation \cite{bose24,eins25,ande95,davi95,brad97}, which is also modeled by a nonlinear Schr\"odinger equation \cite{gros61,pita61}, such a sign-changing nonlinearity coefficient can be created by spatially varying the condensate's atomic scattering length making the so-called collisionally inhomogeneous nonlinearity \cite{theo05,theo06,saka05}.

Tran \cite{tran92b} extended the work of, e.g., \cite{akhm82,{ckrtjmoloney86}}, in which the inhomogeneity is linear. In \cite{tran92b} the self-defocusing inhomogeneity has a small nonlinearity coefficient, such that the characteristics of the stationary solutions are closely related to the corresponding linear problem. The system is later studied by Leon \cite{jleon04}, where the nonlinearity coefficient of the inhomogeneity is of the same order as the self-focusing regions. Analytical solutions of stable symmetric solutions below a threshold amplitude are derived in terms of Jacobian elliptic functions \cite{jleon04}. Here, we revisit the problem.

We study the existence and stability of symmetric and asymmetric solutions, particularly the fundamental and the first excited mode, when the nonlinearity coefficient of the defocusing inhomogeneity is of the same order as the focusing bounding regions. We show that continuing from the linear limit solutions, there is a saddle-node bifurcation at which the symmetric mode becomes unstable and asymmetric modes emerge. Even though it is similar to the results reported in \cite{tran92b}, there is a significant difference where the asymmetric positive solutions are all stable in their existence region. Moreover, \cite{tran92b} only considers positive solutions. Besides determining the instability of symmetric solutions numerically, we also show it analytically using topological argument techniques as developed in \cite{ckrtjmoloney86,{ckrtj88}}. We also comment on the inapplicability of the analytical techniques to asymmetric solutions.

In Section 2, the governing equations are discussed and the corresponding linear eigenvalue problem is derived. In Section 3, we consider the linear limit of the equations, where a transcendental equation determining the bifurcation points of nonuniform solutions from the uniform solution $u=0$ is derived. In the same section, numerical continuations of the fundamental and the first excited state from the linear limit to nonlinear states are presented. The linear (in)stability of the numerically obtained (symmetric and asymmetric) solutions is then determined numerically by solving the corresponding linear eigenvalue problem. The instability of the symmetric solutions are analyzed analytically in Section 4 using a topological argument. In Section 5 we consider some asymmetric solutions.

\section{Mathematical model}

We consider the following governing system of differential equations
\begin{equation}\begin{array}{lll}
i\Psi_{t}+\Psi_{xx}+|\Psi|^2\Psi=V\Psi &&|x|>L,\\
i\Psi_{t}+\Psi_{xx}-\eta |\Psi|^2\Psi=0 &&|x|<L,
\end{array}
\label{gov1}
\end{equation}
where the `outer' and the `inner' equation has focusing and defocusing ($\eta>0$) type nonlinearity, respectively, and $L$ is a positive real parameter representing half the length of the waveguide. The norm
\[
N=\int_{-\infty}^{\infty}|\Psi(x,t)|^2\,dx,
\]
which is physically related to the intensity power of the electromagnetic field in the context of nonlinear optics or the number of atoms in Bose-Einstein condensates is conserved.

To study standing waves of (\ref{gov1}), we pass to a rotating
frame and consider solutions of the form $\Psi(x,t) = e^{-i \w t} \psi(x,t)$. We then have
\begin{equation}\begin{array}{lll}
i\psi_{t}+\psi_{xx}+|\psi|^2\psi=(V-\w) \psi&&|x|>L,\\
i\psi_{t}+\psi_{xx}-\eta |\psi|^2\psi=-\w \psi &&|x|<L.
\end{array}
\label{gov2}
\end{equation}
Standing wave solutions of (\ref{gov1}) will be steady-state solutions to (\ref{gov2}). In the following, the parameter $\eta$ is taken to be $\eta = 1$. We consider real, $t$ independent solutions $u(x)$ to the ODE:
\begin{equation}\begin{array}{ccccc}
u_{xx} &=& (V-\w )u - u^3 & & |x|>L, \\
u_{xx} &=& -\w u + u^3 & & |x| < L.
\end{array} \label{stat1}\end{equation}
\noindent To obtain solutions that decay to 0 as $x \to \pm \infty$, the condition that $V-\w > 0$ is required, with $\w \in \R$. We will also require that $u_x \to 0$ as $x \to \pm \infty$. To establish the instability of a standing wave solution we linearize (\ref{gov2}) about a solution to (\ref{stat1}). Writing $\psi=u(x) + \epsilon\left((r(x)+is(x))e^{\lambda t} + (r(x)^\star+is(x)^\star) e^{\lambda^\star t}\right)$ and retaining terms linear in $\epsilon$ leads to the eigenvalue problem
\begin{equation}
\lambda\left(\begin{array}{cc} r \\ s \end{array}\right) = \left(\begin{array}{cc} 0 & D_{-} \\ - D_{+} & 0 \end{array}\right) \left(\begin{array}{cc} r \\ s \end{array}\right)  = M \left(\begin{array}{cc} r \\ s \end{array}\right),
\label{eq:linear}
\end{equation}
where the linear operators $D_{+}$ and $D_{-}$ are defined as
\begin{eqnarray}
\begin{array}{lll}
D_{+} = \begin{array}{lll}
   \frac{\partial^2}{\partial x^2} - (V- \w ) + 3u^2, & |x|>L, \\
   \frac{\partial^2}{\partial x^2} + \w - 3u^2, &|x|<L,
  \end{array}
\end{array}
\label{eq:plusoperator}\\
\begin{array}{lll}
D_{-} = \begin{array}{lll}
   \frac{\partial^2}{\partial x^2} - (V- \w ) + u^2, & |x|>L, \\
   \frac{\partial^2}{\partial x^2} + \w - u^2, &|x|<L.
  \end{array}
\end{array}
\label{eq:minusoperator}
\end{eqnarray}
It is then clear that the presence of an eigenvalue of $M$ with positive real part implies instability.

\section{Linear states and their continuation}

\begin{figure}[tbhp!]
\begin{center}
\includegraphics[scale=0.5]{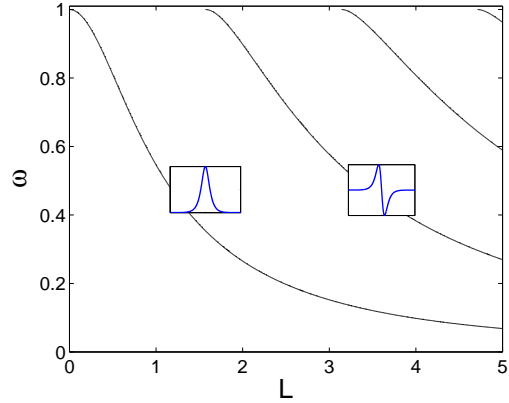}
\caption{Bifurcation points of non-uniform solutions from the zero solution $u\equiv0$ in the $(L,\omega)$-plane for $V=1$. The insets present a sketch of the corresponding solution $u(x)$ along the first two branches.}
\label{transfig}
\end{center}
\end{figure}

In the small limit of $u(x)$, the governing equation (\ref{stat1}) is reduced to the linearized system
\begin{equation}\begin{array}{ccccc}
u_{xx} &=& (V-\w )u  & & |x|>L, \\
u_{xx} &=& -\w u  & & |x| < L,
\end{array} \label{lin}\end{equation}
which can be simply solved analytically to yield
\begin{equation}
u(x)=\left\{\begin{array}{lllllll}
& e^{-\sqrt{V-\w}|x|},  &  x<-L, \\
& c_e\cos(\sqrt\omega x)+c_o\sin(\sqrt{\omega}x), &  |x| < L,\\
& c_r e^{-\sqrt{V-\w}|x|},  & x>L.
\end{array} \right.
\label{linsol}
\end{equation}
From the natural continuity conditions at the points of discontinuity
\[u(\pm L^+)=u(\pm L^-),\,u_x(\pm L^+)=u_x(\pm L^-),\]
one will obtain that the parameters of the linear states above will have to satisfy the transcendental equation
\begin{equation}
\sqrt{V-\omega}\left(1-2\cos^2(\sqrt{\omega}L)\right)=\frac12\left(\frac{V}{\sqrt\omega}-2\sqrt\omega\right)\sin(2\sqrt\omega L).
\label{trans}
\end{equation}
This equation determines bifurcation points of non-uniform states from the zero solution. A plot of (\ref{trans}) for $V=1$ is given in Fig.\ \ref{transfig}.

Starting from a bifurcation point, as the parameter $\omega$ varies, the corresponding linear limit solution will deform and nonlinear terms will play a role. Even though one can still represent the continued solutions in terms of the Jacobian elliptic functions \cite{jleon04}, here we solely use numerical computations. A pseudo-arclength method is used to follow the existence curve of a solution as a parameter is varied. We have solved Eqs.\ (\ref{stat1})--(\ref{eq:linear}) numerically to study the existence and the stability of localized standing waves, where a central finite difference is used to approximate the Laplacian with a relatively fine discretization. In particular, we consider the first two branches of linear limits shown in Fig.\ \ref{transfig}.

\subsection{Positive solutions}

\begin{figure}[tbhp!]
\begin{center}
\includegraphics[scale=0.5]{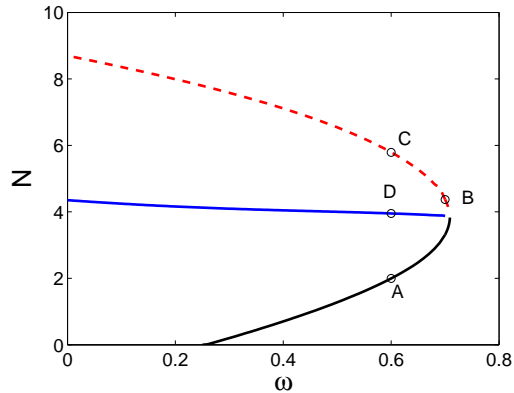}
\caption{(Color online) The norm $P$ as a function of $\omega$ for the first (positive) state corresponding to the first branch in Fig.\ \ref{transfig}. As $\omega$ increases from the bifurcation point $\omega\approx0.265$, there is a bifurcation at which the state becomes unstable. In addition to the symmetric positive states, there is also a stable asymmetric state along the middle (solid blue) branch. Solid and dashed curve represents stable and unstable solutions, respectively.}
\label{fig2_num}
\end{center}
\end{figure}

First, we consider the continuation of the linear state corresponding to the first branch. For illustrative purposes, we take $V=1$ and $L=2$, i.e.\ the fundamental state mode originates from $\omega\approx0.265$. In Fig.\ \ref{fig2_num}, we present the numerically obtained continuation of the linear positive solution as $\omega$ varies.

At the bifurcation point, the linear state is expected to be stable, similar to the zero uniform state $u(x)\equiv0$. As $\omega$ increases, the norm $N$ of the solution increases as well. By appealing to the work of  \cite{gril87,{sh-str85},{wein85}}, we obtain that the solution along this branch is stable. In Fig.\ \ref{fig1_num}(a), we depict a solution corresponding to point A in Fig.\ \ref{fig2_num} and its eigenvalue structure in the complex plane, where one can see that all the eigenvalues are on the imaginary line.

\begin{figure}[tbhp!]
\begin{center}
\subfigure[]{\includegraphics[scale=0.45]{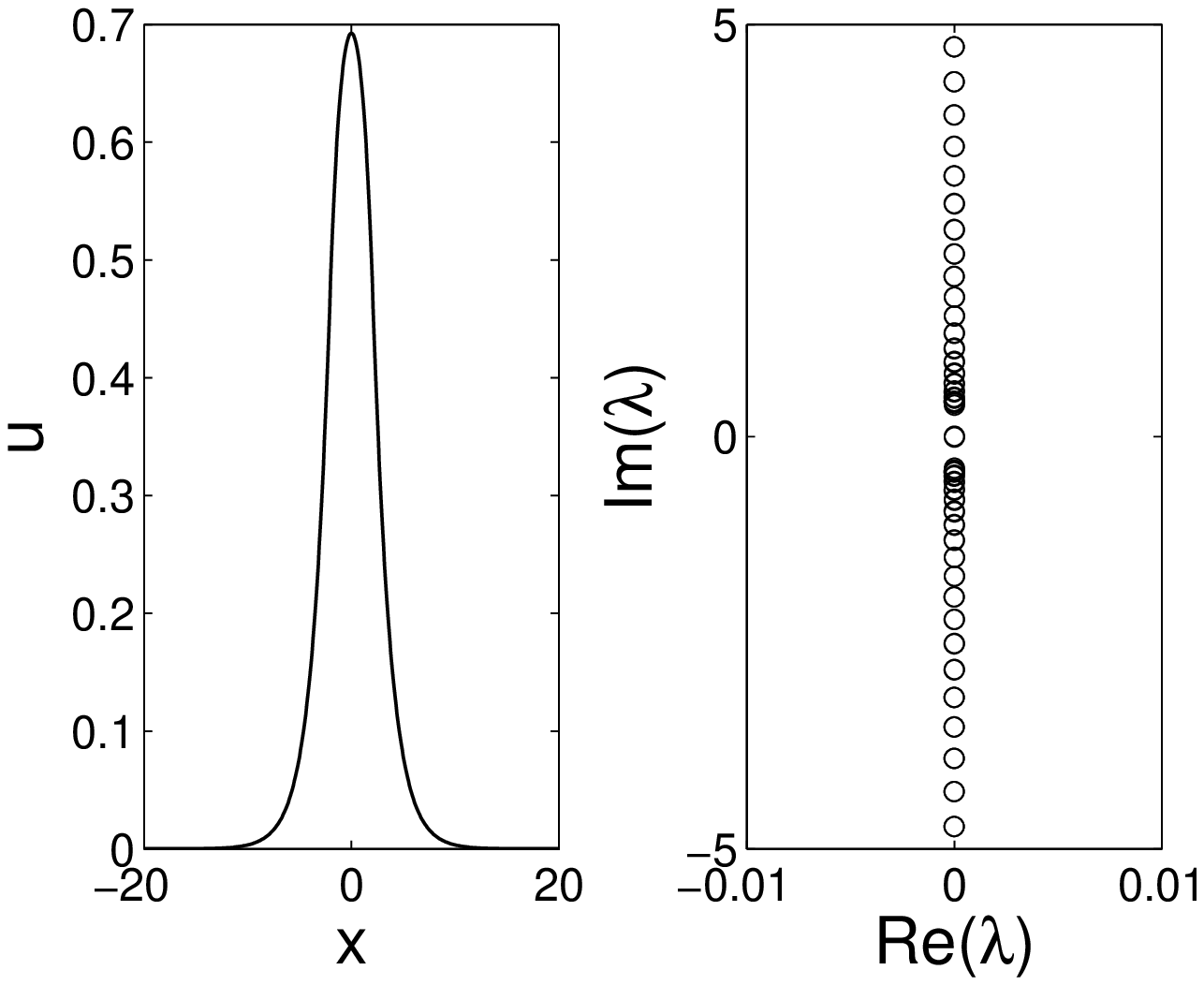}}
\subfigure[]{\includegraphics[scale=0.45]{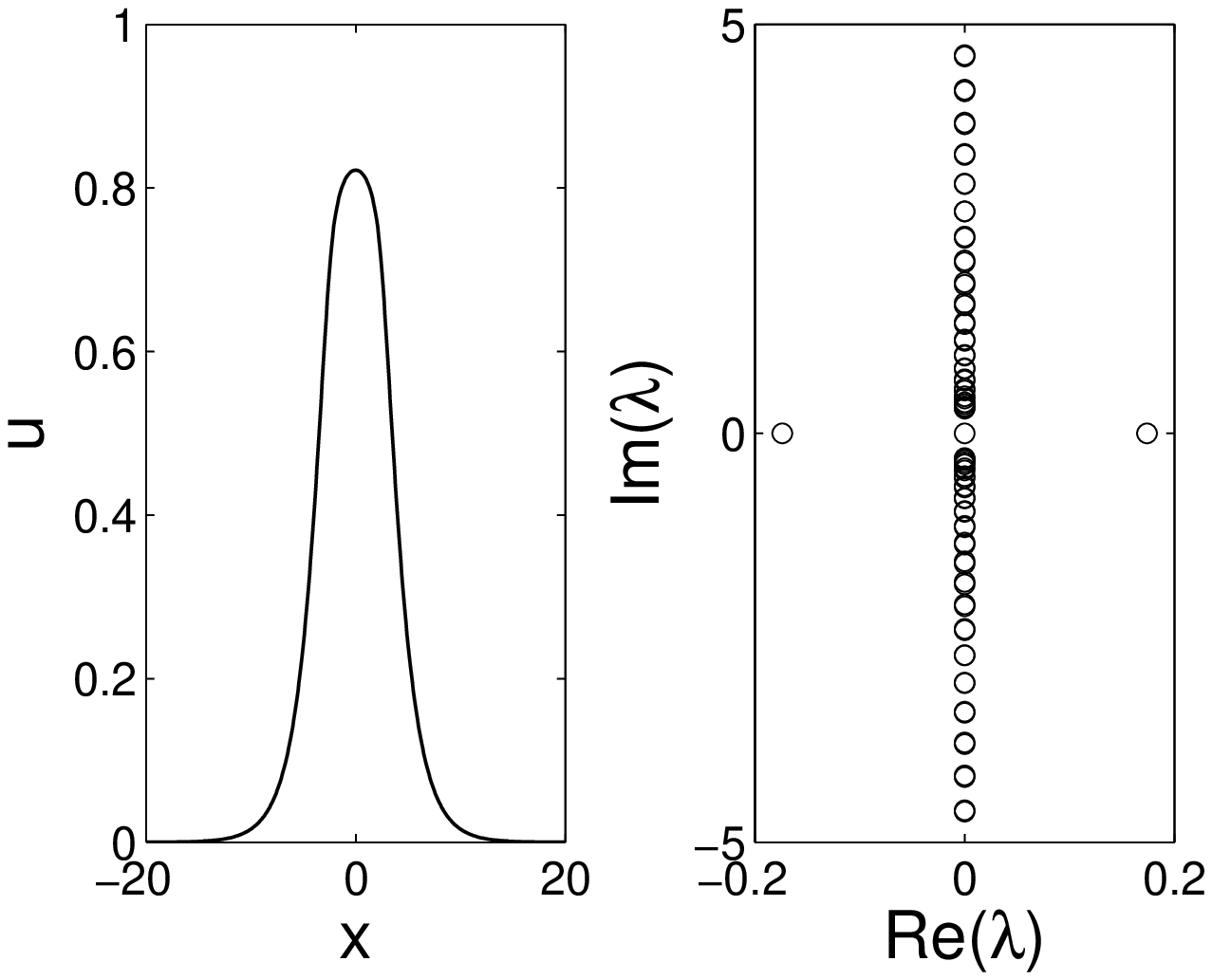}}
\subfigure[]{\includegraphics[scale=0.45]{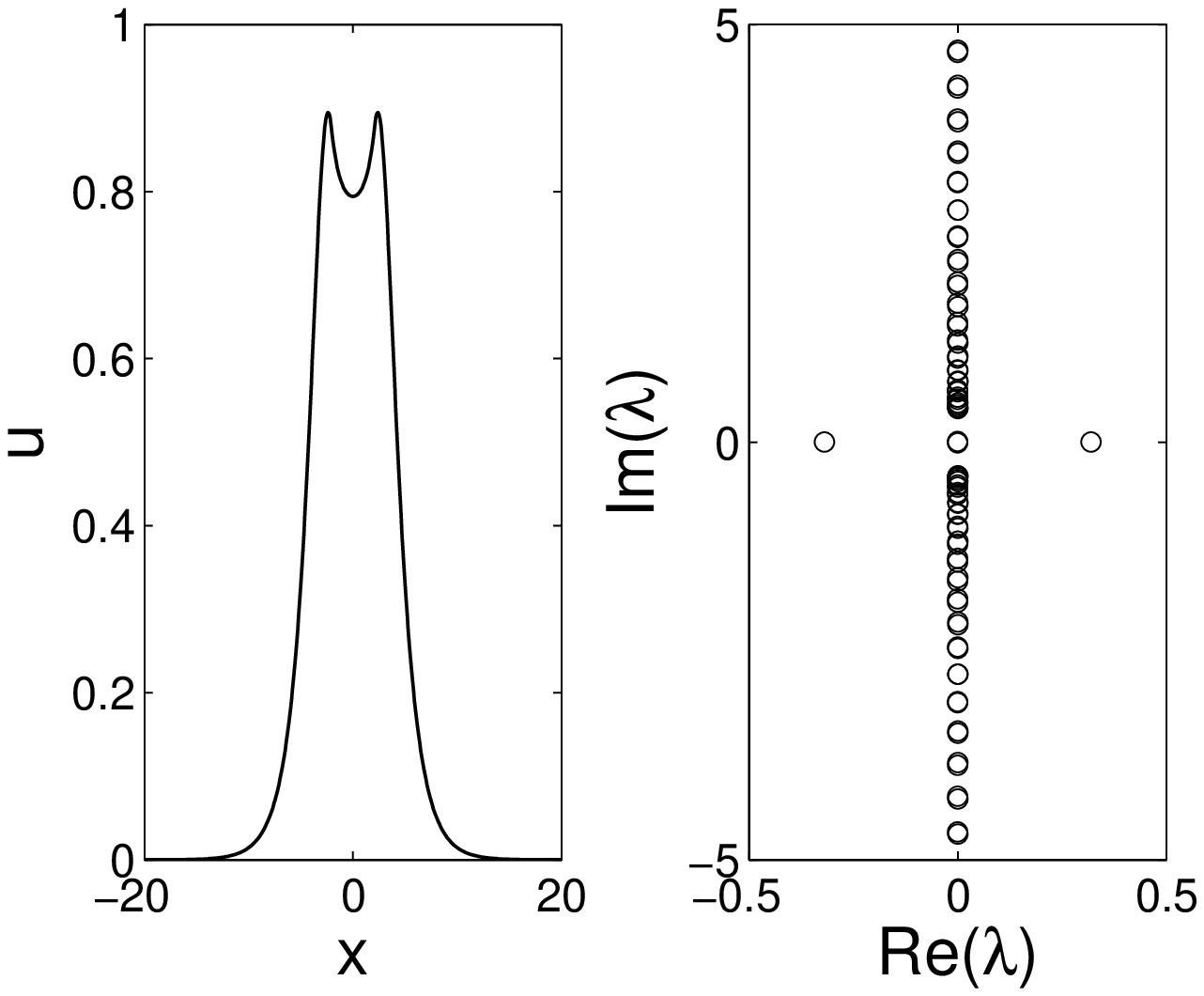}}
\subfigure[]{\includegraphics[scale=0.45]{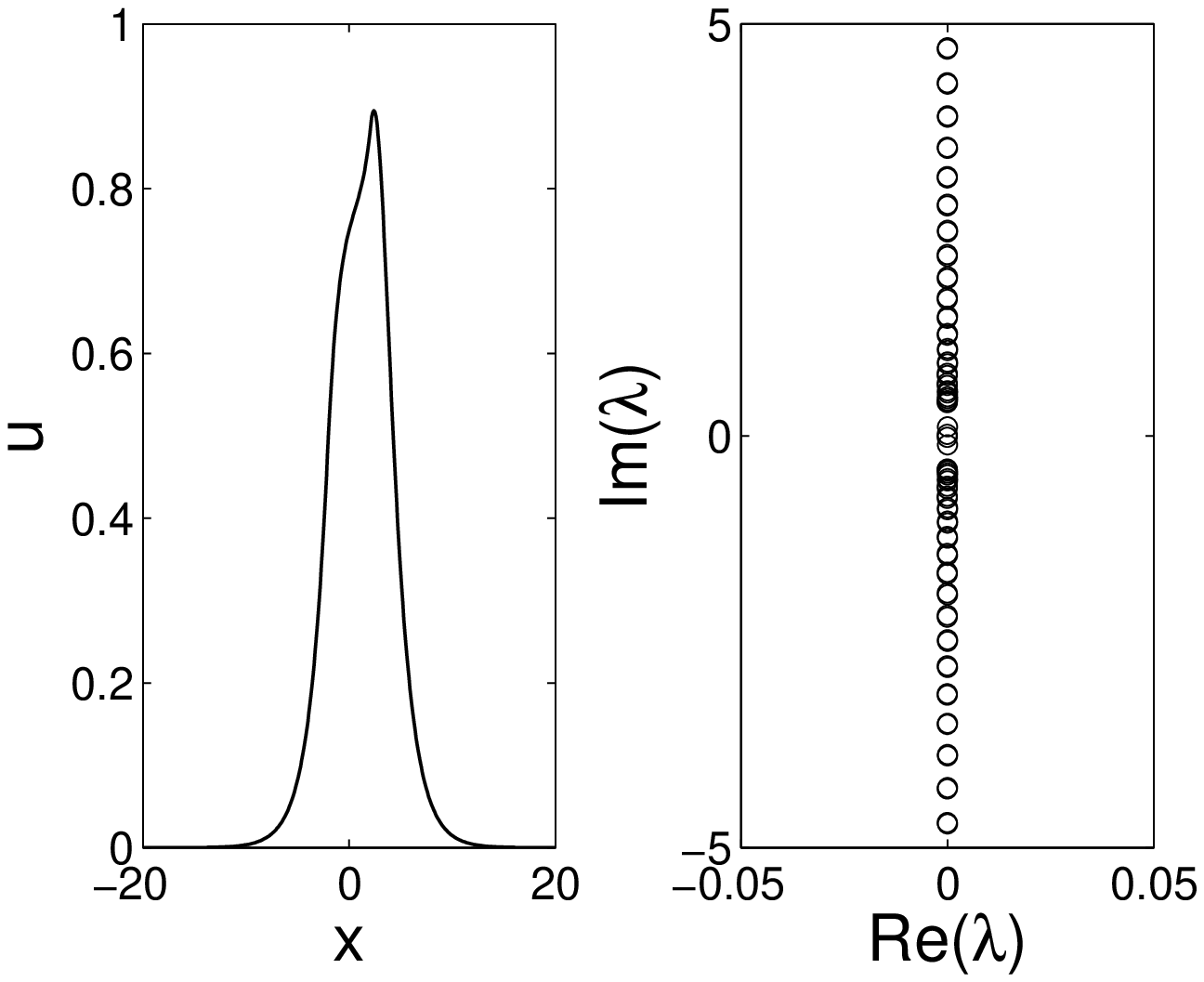}}
\caption{Profile of solutions at points indicated as A--D in Fig.\ \ref{fig2_num} and their eigenvalue structures in the complex plane. The phase-portraits of the solutions in each panel are presented in Fig.\ \ref{fig1_num_pp}.}
\label{fig1_num}
\end{center}
\end{figure}

As the parameter $\omega$ is increased further, there is a bifurcation at which the existence curve reverses direction. Symmetric solutions are unstable along this branch. Two solutions and their eigenvalues in the complex plane corresponding to point B and C are shown in Figs.\ \ref{fig1_num}(b) and (c), respectively. One can note that the instability of the solutions are due to the presence of a pair of eigenvalues with nonzero real part.

Interestingly, in addition to the symmetric states, at the bifurcation point where symmetric states become unstable, there is an existence curve emerging, corresponding to some numerically stable asymmetric states. A solution indicated as point D in Fig.\ \ref{fig2_num} is presented in Fig.\ \ref{fig1_num}(d).

\begin{figure}[tbhp!]
\begin{center}
\subfigure[]{\includegraphics[scale=0.5]{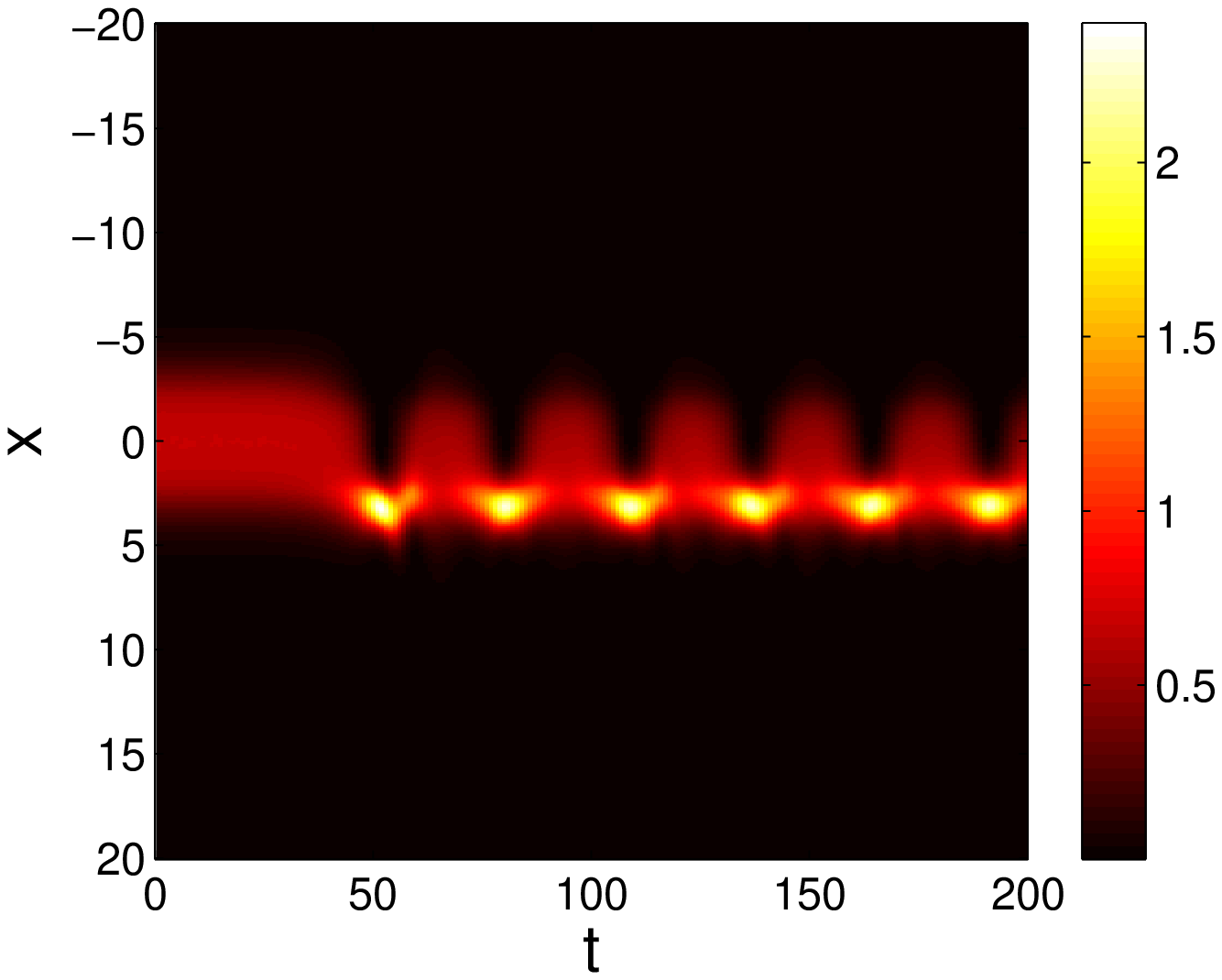}}
\subfigure[]{\includegraphics[scale=0.5]{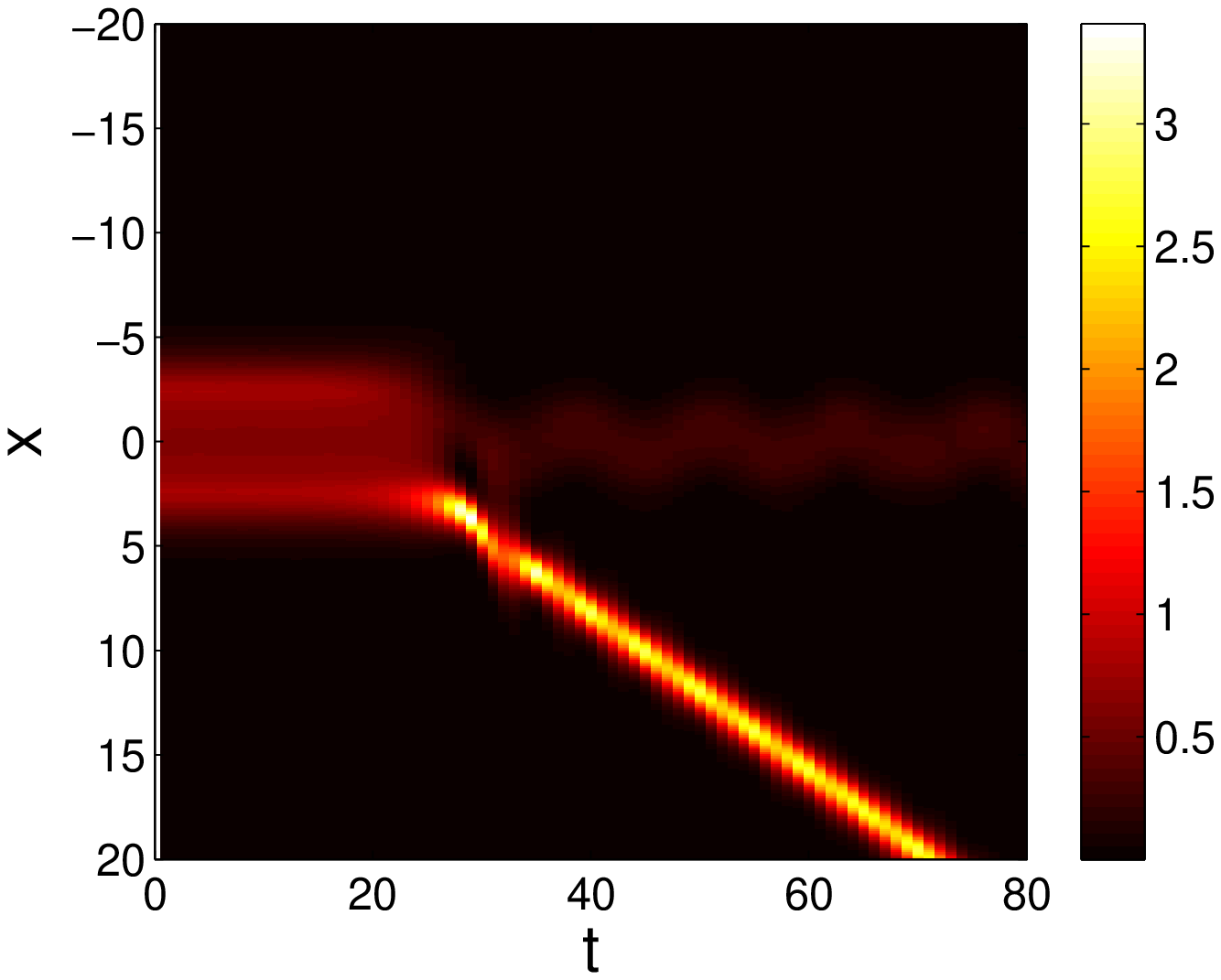}}
\caption{Time dynamics of solutions at points indicated as B and C in Fig.\ \ref{fig2_num}. Shown is the top view of $|\Psi(x,t)|^2$.}
\label{fig1_num_tevolv}
\end{center}
\end{figure}

Figure \ref{fig2_num} is similar to Fig.\ 1 in \cite{tran92b}, which is for the case of relatively small $\eta>0$. Writing $(V-\omega)\to W^2$ and $-\omega\to(W^2-V^2)$ and $L=1$, the time independent system of (\ref{gov2}) becomes the same as Eq.\ (1) in \cite{tran92b}. One important difference is that in our case, asymmetric solutions are all (at least numerically) stable in their existence domain.

When a solution is unstable, it is certainly of interest to see the dynamics near it. Here, we have solved the time-dependent governing equation (\ref{gov1}) using a Runge-Kutta method. Depicted in Figs.\ \ref{fig1_num_tevolv}(a) and (b) are the dynamics of solution (b) and (c) in Fig.\ \ref{fig1_num}, respectively, perturbed initially by small random disturbances. Shown is the modulus $|\psi(x,t)|^2$. One can clearly see that the instability of solution in Fig.\ \ref{fig1_num}(b) manifests in the form of spontaneous symmetry breaking, while the instability of the solution in Fig.\ \ref{fig1_num}(c) is in the form of a soliton generation, similar to that reported before in \cite{jleon04}.

\subsection{First excited state}

We have considered as well solutions bifurcating from the first excited corresponding to the second branch in Fig.\ \ref{transfig}. For the same $V$ and $L$ as above, this state bifurcates from the point $\omega\approx0.898$. We depict in Fig.\ \ref{fig3_num} the continuation of this state as $\omega$ varies.

\begin{figure}[tbhp!]
\begin{center}
\includegraphics[scale=0.5]{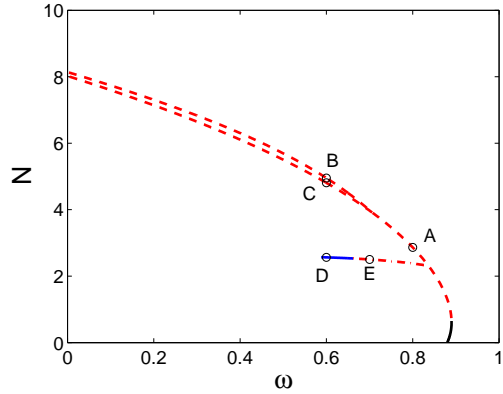}
\caption{The same as Fig.\ \ref{fig2_num}, but for the first excited state, corresponding to the second branch in Fig.\ \ref{transfig}. The bifurcation point of the linear state is $\omega\approx0.898$.}
\label{fig3_num}
\end{center}
\end{figure}

As the parameter $\omega$ increases from the bifurcation point, one will obtain a numerically stable symmetric state. When the parameter is increased further, there will also be a `direction reversal' point, where the symmetric state becomes unstable, similar to the case of fundamental state solutions above. Shown in Fig.\ \ref{fig4_num}(a) is an example of this state and its spectrum in the complex plane.

\begin{figure}[tbhp!]
\begin{center}
\subfigure[]{\includegraphics[scale=0.45]{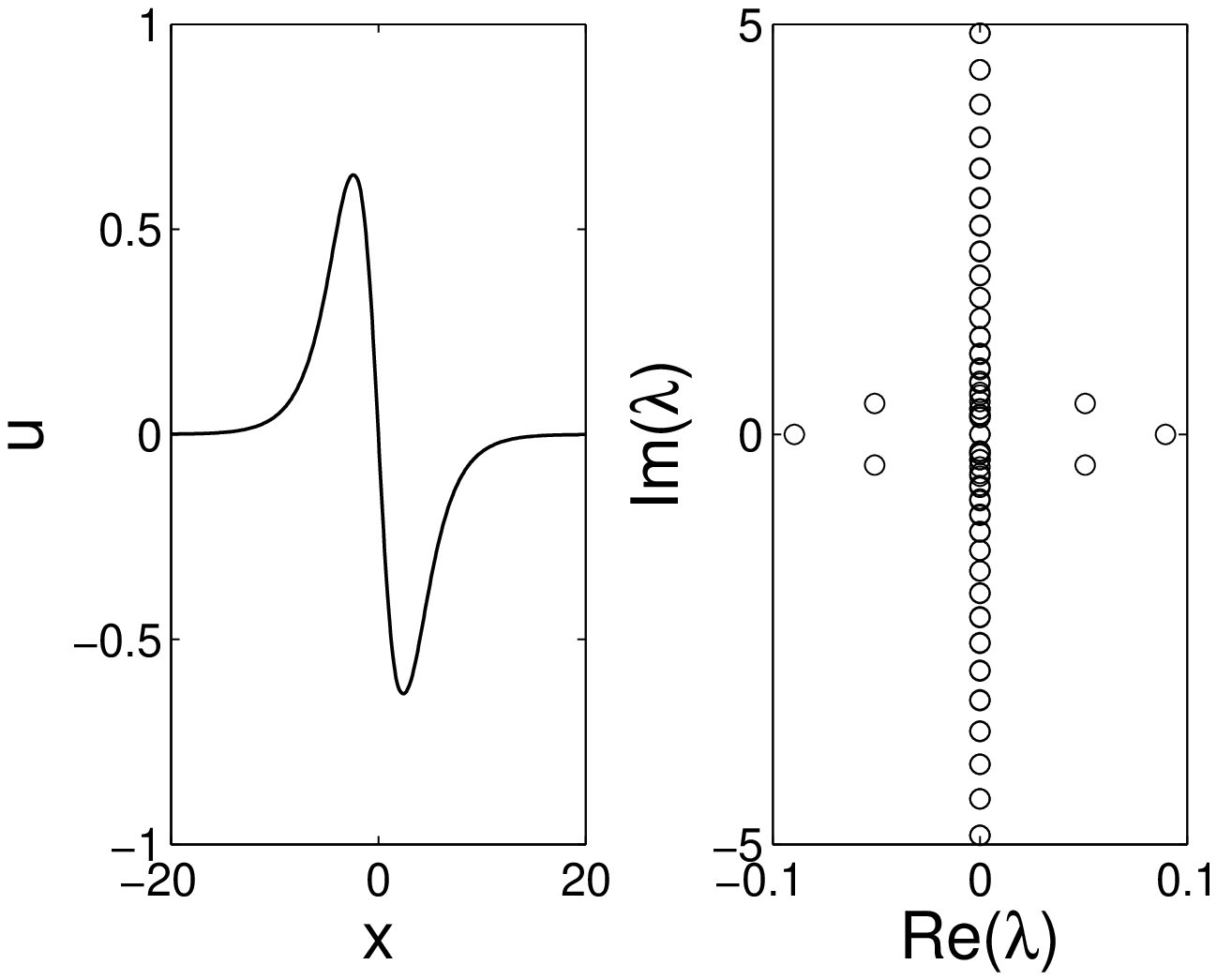}}
\subfigure[]{\includegraphics[scale=0.45]{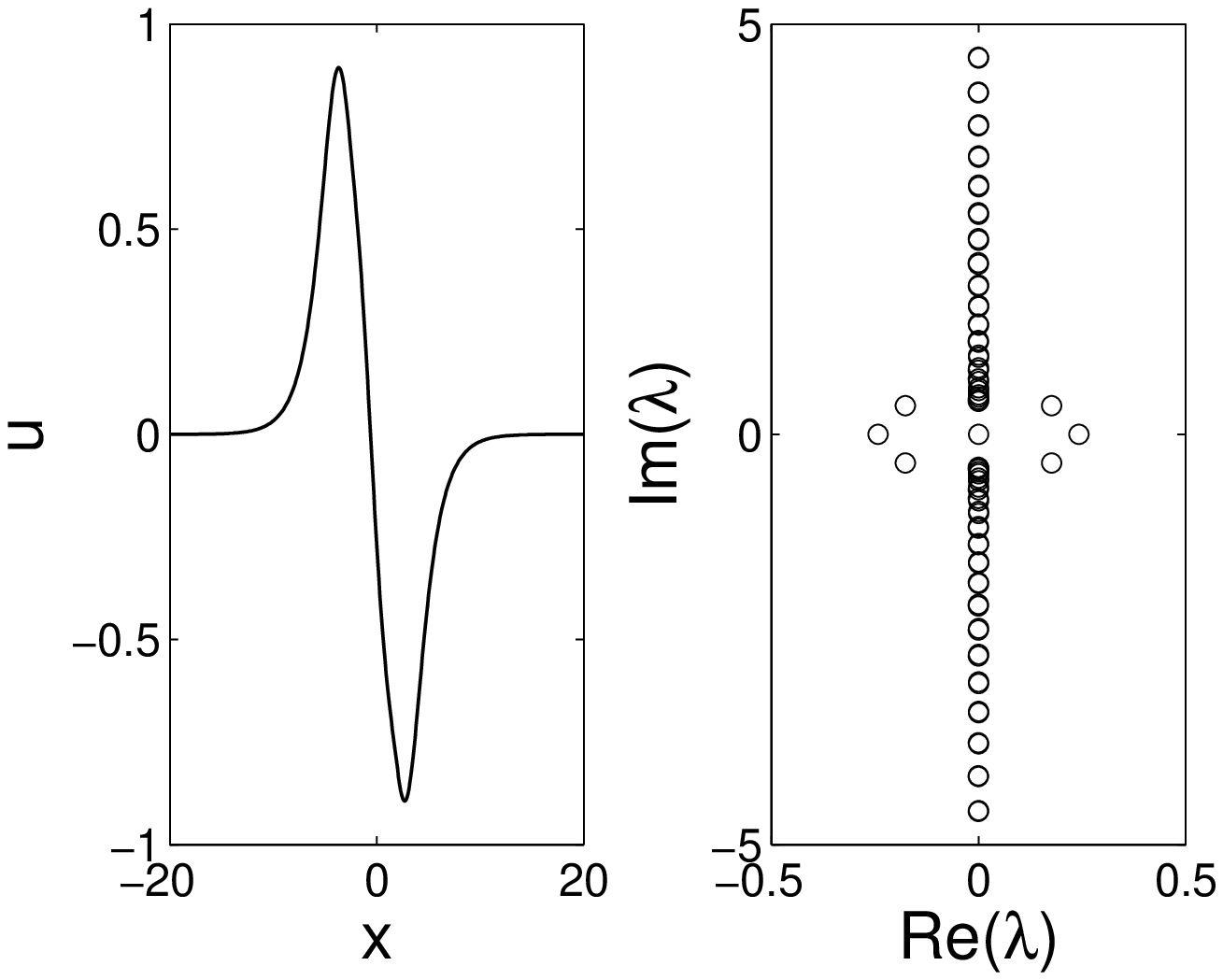}}
\subfigure[]{\includegraphics[scale=0.45]{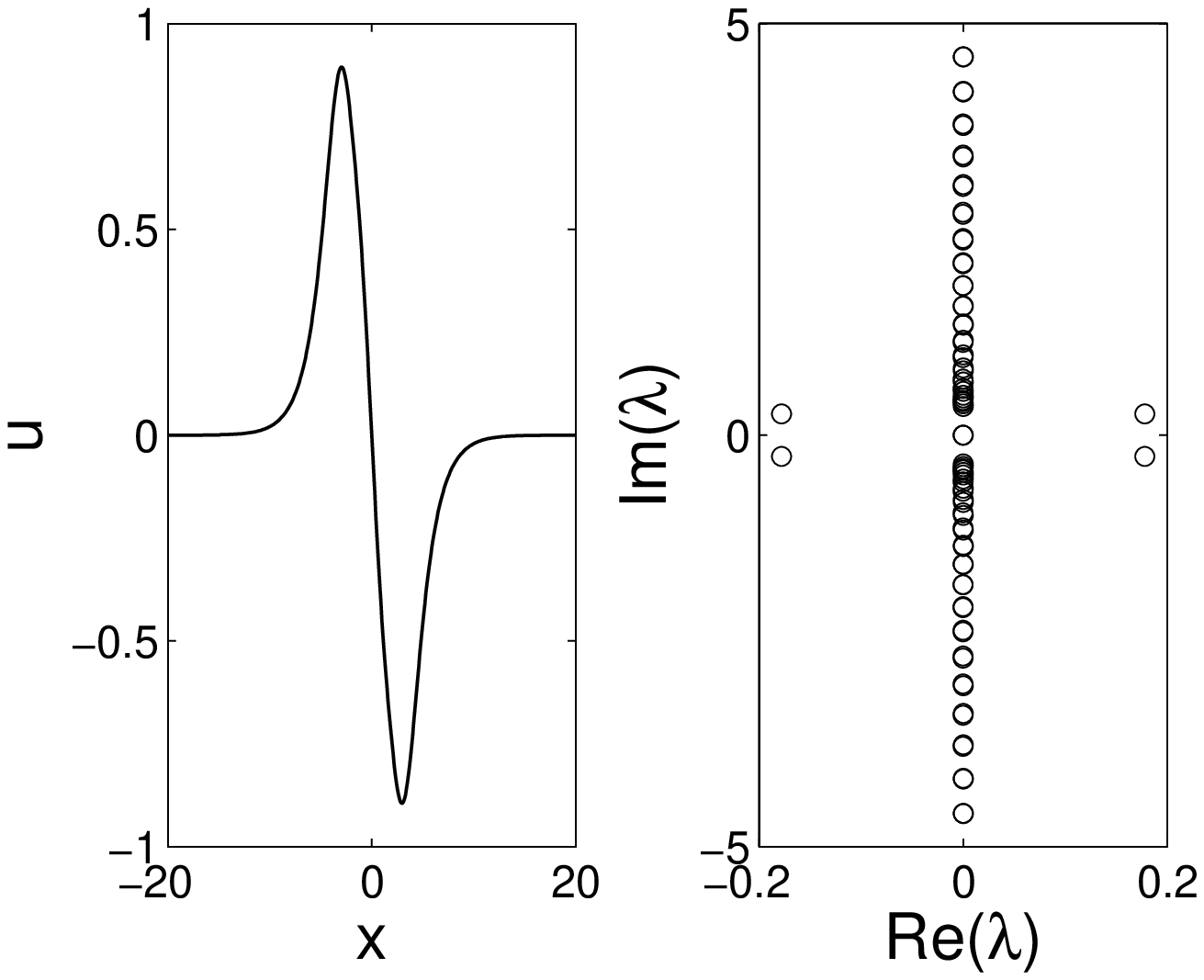}}
\subfigure[]{\includegraphics[scale=0.45]{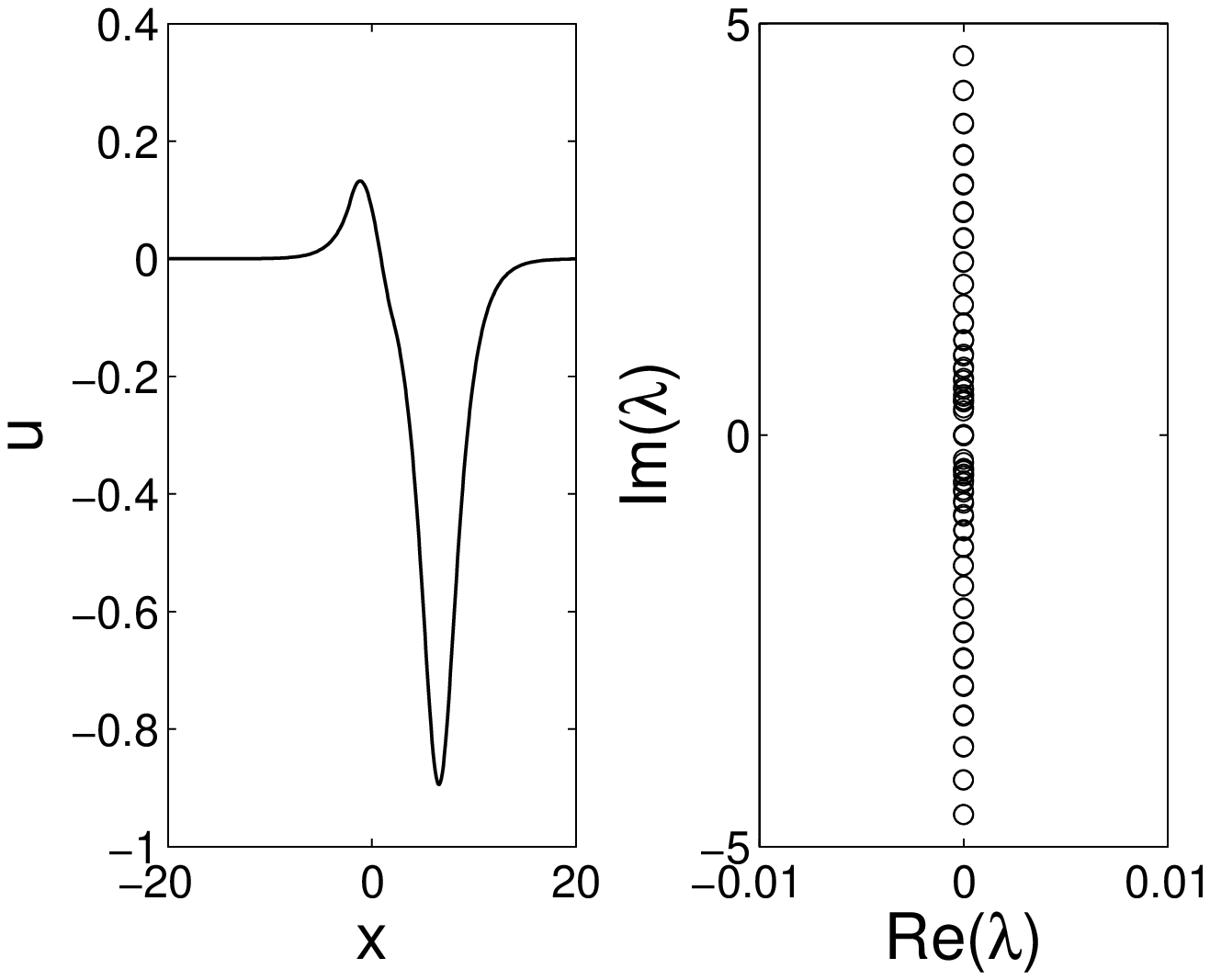}}
\subfigure[]{\includegraphics[scale=0.45]{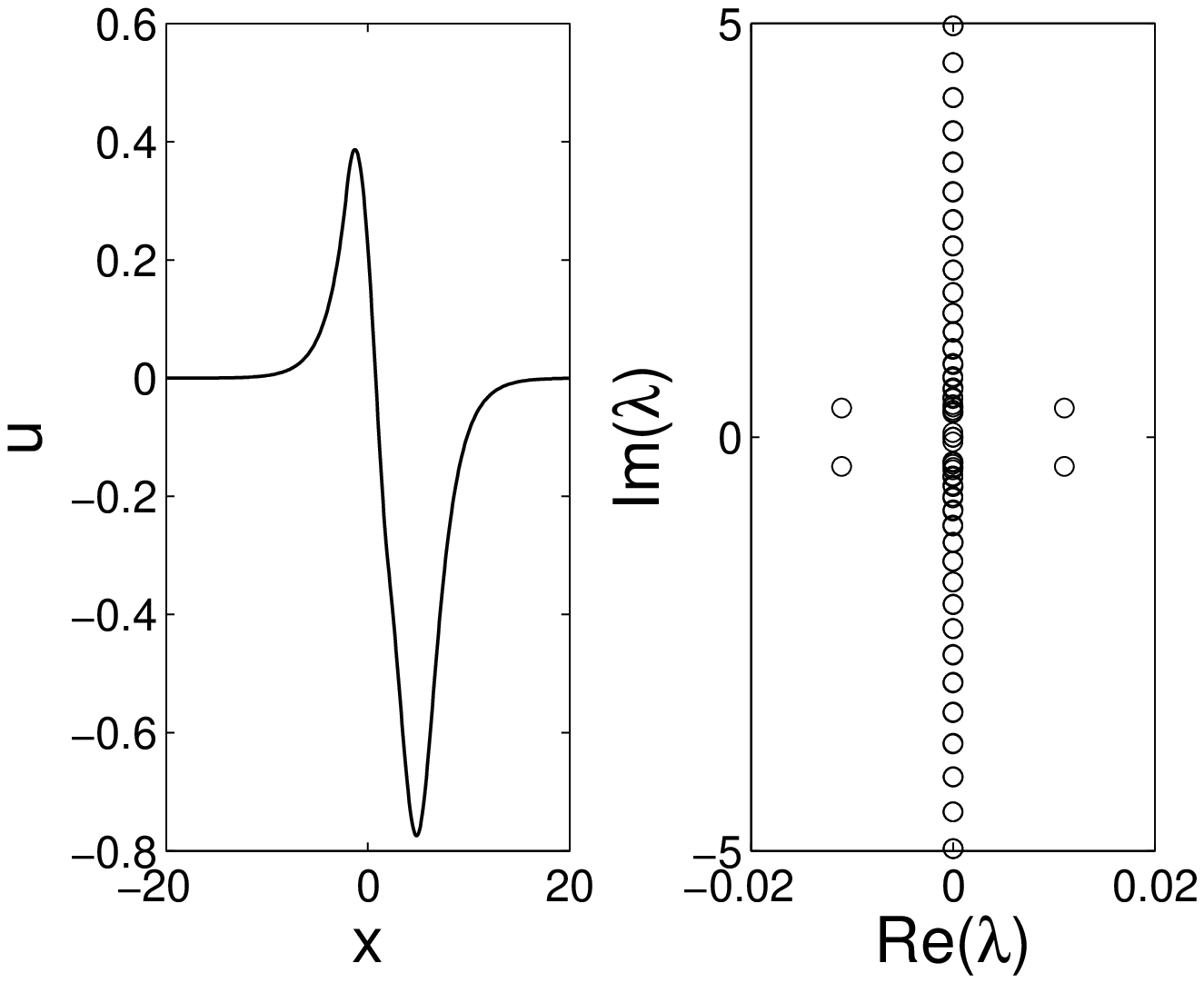}}
\caption{Profile of solutions and their corresponding eigenvalue structure in the complex plane at points indicated as A--D in Fig.\ \ref{fig3_num}. The solution phase-portraits are shown in Fig.\ \ref{fig4_num_pp}.}
\label{fig4_num}
\end{center}
\end{figure}

Besides similarities with the previous case, we also observed several differences here. These include the fact that there are now more than one existence branches corresponding to asymmetric solutions. Presented in Fig.\ \ref{fig4_num}(b) is a solution along the first asymmetric branch, indicated as point B in Fig.\ \ref{fig3_num}, and its eigenvalue structure. As a comparison, we also plot in Fig.\ \ref{fig4_num}(c) and (d) the symmetric and asymmetric solution from point C and D in Fig.\ \ref{fig3_num}, respectively, and their eigenvalues in the complex plane.

The two asymmetric solutions from point B and D above are clearly different. By viewing this first excited state as composed of two static out-of-phase solitons, the asymmetric solution B can be seen as composed of two solitons with one of them spatially displaced, while the solution D can be viewed as composed of two solitons with different amplitude.

It is also interesting to note that the asymmetric solution D is not always numerically unstable in its existence region. Numerically, we observed a region of stability and of instability for this asymmetric state, depicted as a solid and dashed line respectively in Fig.\ \ref{fig3_num}. In Fig.\ \ref{fig4_num}(e), we present an unstable asymmetric state and its spectra, where one can see the presence of two pairs of eigenvalues with nonzero real part.

\begin{figure}[tbhp!]
\begin{center}
\subfigure[]{\includegraphics[scale=0.5]{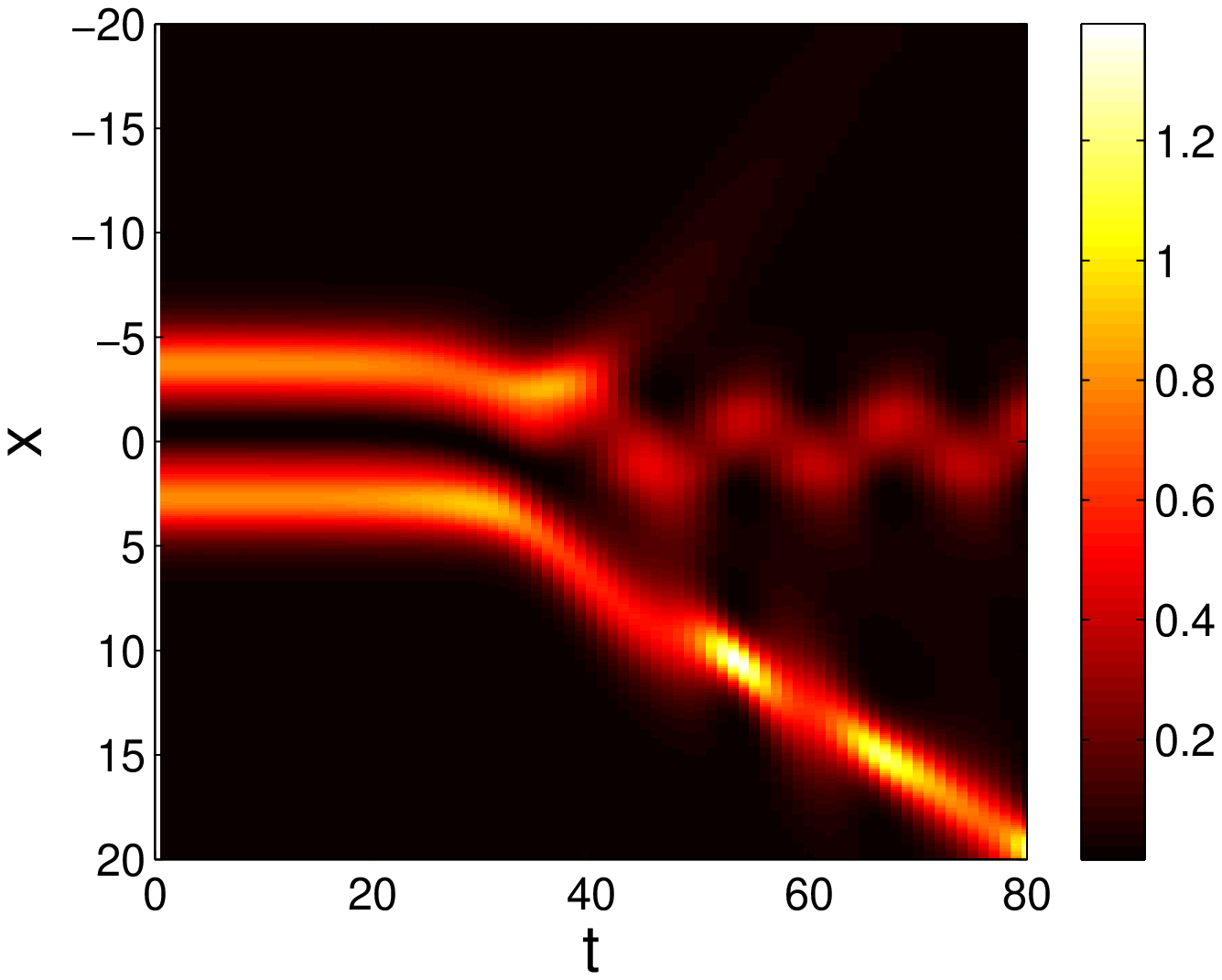}}
\subfigure[]{\includegraphics[scale=0.5]{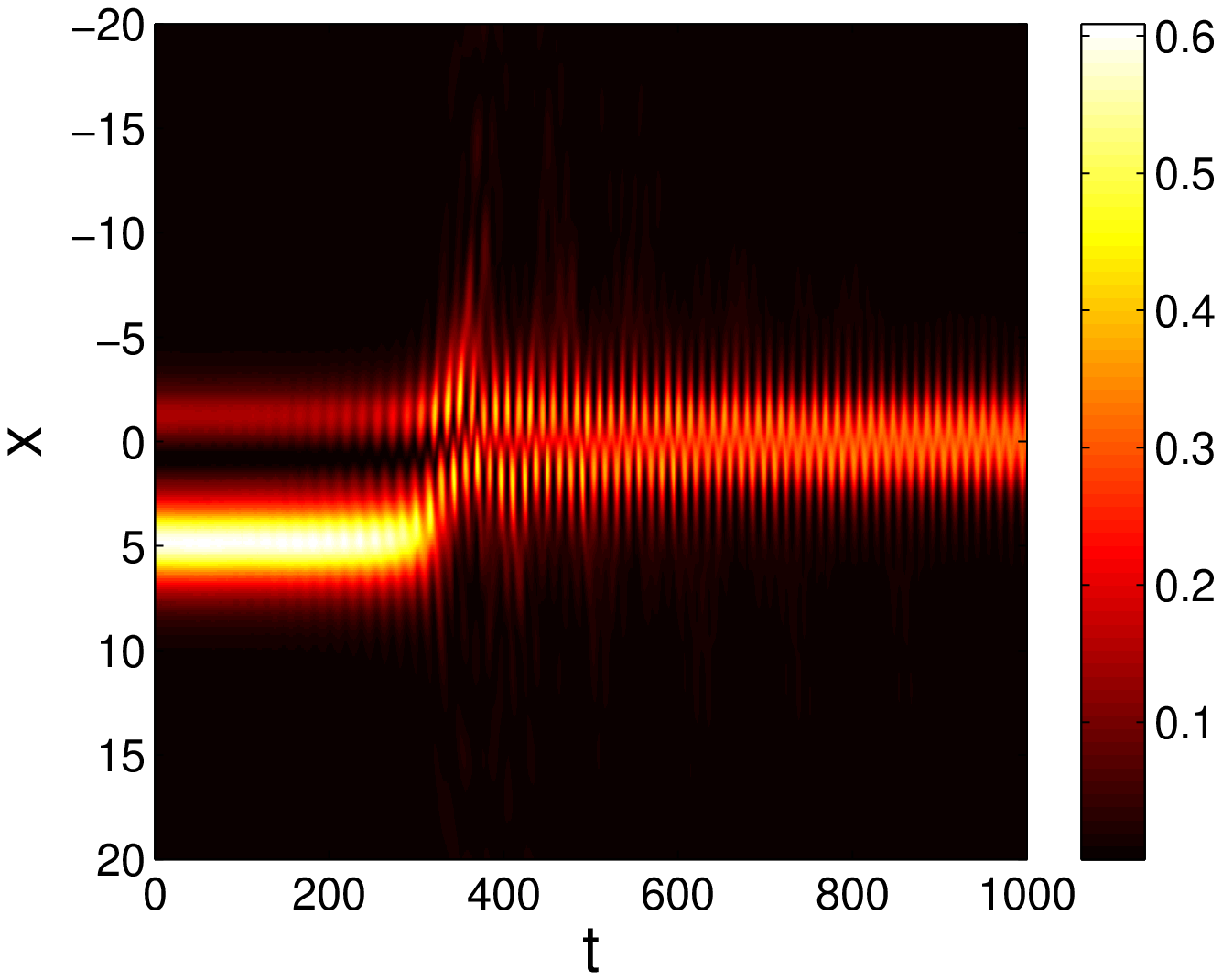}}
\caption{The dynamics in time of solutions at points indicated as B and E in Fig.\ \ref{fig3_num}, respectively. Shown is the top view of $|\Psi(x,t)|^2$.}
\label{fig4_num_tevolv}
\end{center}
\end{figure}

Typical time dynamics of unstable solutions for this case is presented in Fig.\ \ref{fig4_num_tevolv}. In particular, we plot the dynamics of asymmetric solution B and E under random perturbations. The dynamics in time of solution C is similar to that of solution B.

\section{Unstable symmetric solutions past the bifurcation point}
\label{sympos}

In the following, we will analytically prove the instability of symmetric solutions past the `direction reversal' point above. To show instability of the standing waves, we will show that $M$ has a real positive eigenvalue. This is done by applying the main theorem of  \cite{ckrtj88}. It can be shown that the following quantities are well defined (see for example \cite{ckrtj88}):
\begin{eqnarray*}
P &=& \textrm{ the number of positive eigenvalues of } D_{+} \\
Q &=& \textrm{ the number of positive eigenvalues of } D_{-}.
\end{eqnarray*}
\noindent We then have the following:
\begin{theorem}[\cite{ckrtj88}] If $P-Q \neq 0, 1,$ there is a real positive eigenvalue of the operator $M$.  \label{th:ckrtj88}
\end{theorem}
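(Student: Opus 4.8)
The plan is to convert the non-self-adjoint spectral problem (\ref{eq:linear}) into an intersection count for a Hamiltonian ODE and to read the count off topologically. Written out, (\ref{eq:linear}) reads $D_- s = \lambda r$ and $-D_+ r = \lambda s$; eliminating $s$ gives the fourth-order problem $D_- D_+ r = -\lambda^2 r$, so a real positive eigenvalue $\lambda$ corresponds precisely to a nontrivial $r$ that decays, together with its derivatives, as $x\to\pm\infty$ and carries spectral parameter $\mu = -\lambda^2 < 0$. First I would recast this as a first-order system $X_x = A(x;\lambda)X$ on $\R^4$, with $X=(r,r_x,s,s_x)$. The crucial structural observation is that $M = JH$, with $H=\mathrm{diag}(D_+,D_-)$ self-adjoint and $J$ the canonical unit symplectic matrix, so the induced flow is Hamiltonian and the solution subspaces entering the boundary conditions are Lagrangian.

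Next, for each $\lambda>0$ I would propagate the two-dimensional space of solutions decaying as $x\to-\infty$ forward to a fixed section $x=x_0$, obtaining a point $V_-(\lambda)\in\Gr(2,4)$, and likewise take the solutions decaying at $+\infty$ to define $V_+(\lambda)$. Because $V-\w>0$ and $\mu<0$, the asymptotic coefficient matrix is hyperbolic with two stable and two unstable directions, so these subspaces are well defined and depend smoothly on $\lambda$. An eigenvalue occurs exactly when $V_-(\lambda)\cap V_+(\lambda)\neq\{0\}$, and counting such intersections with sign along the path $\lambda\in(0,\infty)$ is a Maslov-index computation in the Lagrangian Grassmannian.

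The computation is then pinned down at the two endpoints. As $\lambda\to\infty$ the real eigenvalues, being finite in number, are bounded, so for large $\lambda$ the shooting subspaces are transverse and supply a base point with empty intersection. As $\lambda\to0^+$ the system decouples into the two independent second-order problems $D_+ r=0$ and $D_- s=0$, and through Sturm oscillation theory the limiting intersection data are governed by the number of eigenvalues of $D_+$ and of $D_-$ lying above zero. The heart of the argument is to show that the accumulated Maslov index over $(0,\infty)$ is determined by $P-Q$: the positive eigenvalues of $D_+$ and of $D_-$ each produce a definite number of rotations of the corresponding Lagrangian component, and, because $D_+$ and $D_-$ sit in $M$ with opposite signs, they enter the net count with opposite orientations.

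I expect the main obstacle to be the monotonicity-and-endpoint bookkeeping rather than the hyperbolicity or oscillation estimates, which are routine. To convert a nonzero signed index into a genuine eigenvalue one must verify that all crossings occur in the same direction, i.e.\ that the crossing form is definite (equivalently, that the real eigenvalues move monotonically in $\lambda$), so that no cancellation can occur. At $\lambda=0$ the operator $D_-$ always annihilates the standing wave itself --- one checks directly from (\ref{stat1}) that $D_- u = 0$ --- reflecting the gauge invariance of (\ref{gov2}); this guaranteed zero mode, together with the orientation conventions in the Maslov index, is precisely what shifts the count and isolates the two exceptional values $P-Q=0,1$ for which no crossing in $(0,\infty)$ is forced. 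Making this shift and the definiteness of the crossing form rigorous is the delicate step; once it is in place, $P-Q\neq0,1$ leaves a nonzero index and hence at least one real positive eigenvalue of $M$.
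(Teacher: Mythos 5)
You should first be aware that the paper itself contains no proof of this statement: Theorem \ref{th:ckrtj88} is imported verbatim from \cite{ckrtj88}, and the authors' own work begins only afterwards, with the Sturm--Liouville identification of $P$ and $Q$ and the phase-plane computation of $P$ for their particular standing waves. There is therefore no internal proof to compare yours against; the relevant benchmark is Jones' original argument, which compactifies the right half of the $\lambda$-plane, constructs the unstable bundle of the associated first-order system over the resulting sphere, and identifies the count of unstable eigenvalues with a Chern number (equivalently, the winding of an Evans function), the decisive input being the behaviour near $\lambda=0$, where $P$, $Q$ and the kernel of $D_-$ enter. Your plan --- exploit $M=JH$, track the decaying subspaces $V_\pm(\lambda)$ as Lagrangian planes, and extract the eigenvalue count from a Maslov index over $\lambda\in(0,\infty)$ --- is a genuinely different and by now well-established route to the same theorem, and its overall architecture is sound.

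As written, however, the proposal has gaps concentrated exactly where the theorem lives. First, the central claim --- that the accumulated index over $(0,\infty)$ is a computable function of $P-Q$ which is nonzero precisely when $P-Q\notin\{0,1\}$ --- is asserted, not derived; note that $\{0,1\}$ is not a parity condition, so no amount of ``opposite orientation'' bookkeeping alone can produce it. The asymmetry comes from the generalized kernel of $M$ at $\lambda=0$ generated by $D_-u=0$, which you correctly identify, but making the resulting shift rigorous (including the possibility of a non-regular crossing at $\lambda=0$) is most of the work and is precisely what you defer. Second, the symplectic structure must be exhibited, not invoked: the conserved pairing for the $x$-flow is \emph{not} the $J$ appearing in $M=JH$, and because $D_+\neq D_-$ the natural cross-pairings of the $r$ and $s$ components are not conserved in $x$; the form that works is the split one, $\Omega(X,\tilde X)=(r\tilde r_x-r_x\tilde r)-(s\tilde s_x-s_x\tilde s)$, with respect to which the decaying subspaces are indeed isotropic --- without identifying this, the phrase ``Maslov index'' has no content. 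Third, your stated worry about definiteness of the crossing form is misplaced for the purpose at hand: a nonzero \emph{signed} count already forces at least one crossing and hence one real positive eigenvalue; definiteness is needed only to upgrade the index to an exact eigenvalue count. So the delicate step is not monotonicity but the endpoint computation at $\lambda=0^+$. None of this makes the strategy wrong, but in its present form the proposal is a program rather than a proof.
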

From Sturm-Liouville theory, $P$ and $Q$ can be determined by considering solutions of $D_+ v = 0$ and $D_- v = 0$, respectively. In fact, they are the number of
zeros of the associated solution $v$. Notice that $D_- v = 0$ is actually satisfied by the standing wave itself, and that $D_+ v = 0$ is the equation of variations of the
standing wave equation. It follows that:
\begin{eqnarray*}
Q &=& \textrm{ the number of zeros of the standing wave } u(x). \\
P & = & \textrm{ the number of zeros of a solution to the variational equation along $u(x)$. }
\end{eqnarray*}
We will focus first on the case when we have a positive localized steady state solution, i.e.\ $Q = 0$.

\subsection{Positive solutions}

The idea is to use a dynamical systems point of view, and
geometric properties of the solution curves in the phase portrait to establish
when $P \geq 2$. The $t$ independent solutions to equation (\ref{gov2}) can be represented by composite phase portraits constructed by a superpositioning of the
phase portraits of the `outer' system:
\begin{equation}
u_x = y, \qquad
y_x = (V-\omega)u-u^3,
\label{eq:outer}
\end{equation}
and the `inner' one:
\begin{equation}
u_x = y, \qquad
y_x = -\omega u+u^3.
\label{eq:inner}
\end{equation}
We can view the composite picture as a single, non-autonomous system with phase plane given by:
\begin{equation}
\begin{array}{lll}
u_x = y, \\
y_x = \left\{\begin{array}{lll}
   (V-\omega)u-u^3, & |x|>L, \\
   -\omega u+u^3, &|x|<L.
  \end{array}\right.
\end{array}
\label{fiberdyn}
\end{equation}

In the phase plane of (\ref{eq:outer}), the outer system admits a soliton solution, given by the equation:
\begin{equation} \label{eq:homo}
y^2 = (V - \w) u^2 - \frac{u^4}{2},
\end{equation}
\noindent while the inner system (\ref{eq:inner}) admits a heteroclinic orbit in the phase plane given by:
\begin{equation} \label{eq:hetero}
y^2 = - \w u^2 - \frac{u^4}{2} + \frac{\w^2}{2}.
\end{equation}
The conditions that $u$ and $u_x$ decay to zero as $x$ goes to $\pm \infty$, mean that in the superposed phase
portraits, all steady-state solutions that we are interested in will lie on the soliton curve as $x \to \pm \infty$.

Solution curves of the inner system are given by:
\begin{equation} \label{eq:innerC}
y^2 = - \w u^2 + \frac{u^4}{2} + C.
\end{equation}
The solutions we are interested in will travel in the phase plane out from the origin along the homoclinic orbit of the outer system described by (\ref
{eq:homo}) and then `flip' to the inner system for $2L$ units of $x$, and then `flip' back to the outer system along the homoclinic orbit. Define $(u_0,y_0)$ as the
point in the phase plane of (\ref{fiberdyn}) where the solution initially flips from the outer to the inner system, and define $(u_1,y_1)$ as the point in the phase plane
where the solution returns to the outer system. Using this notation we can write the equation of the part of the solution curve in the inner system in the (composite) phase plane as:
\begin{equation} \label{eq:inneru0}
y^2 = - \w u^2 + \frac{u^4}{2} + Vu_0^2-u_0^4.
\end{equation}

As the parameters, $V$ and $\w$ vary, the relative position of the homoclinic orbit described by (\ref{eq:homo}) and the heteroclinic orbit described in (\ref
{eq:hetero}) will change. We consider the qualitative and numerical differences that occur for various values of $L$ corresponding to different
configurations of the potential $V$ and the frequency $\w$. Namely when the curves described by (\ref{eq:hetero}) and (\ref{eq:homo}) are `close'  together, linearly
unstable standing wave solutions to (\ref{gov1}) appear.

\begin{theorem} \label{th:main}
Unstable positive localized solutions to (\ref{gov1}) occur whenever $\frac{\w}{V}<\frac{3}{4} $. The unstable symmetric solutions are the ones which are symmetric in the phase plane with respect to the $u$-axis and leave the homoclinic orbit at a point $(u_0, y_0)$ satisfying $\sqrt{\frac{V}{2}} < u_0$.
\end{theorem}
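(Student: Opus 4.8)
The plan is to leave the geometric bookkeeping of zeros aside at first and instead reduce the whole statement to a one–line spectral criterion. For a positive localized wave $u>0$ we have $Q=0$, so by Theorem~\ref{th:ckrtj88} it suffices to prove $P\ge 2$, i.e.\ that the operator $D_{+}$ of (\ref{eq:plusoperator}) has at least two positive eigenvalues. Because $u$ is even, the coefficient $3u^2$ is even and $D_{+}$ commutes with the reflection $x\mapsto -x$; this parity is what I would exploit. Note also that as $x\to\pm\infty$ the coefficient tends to $-(V-\w)<0$, so the essential spectrum of $D_{+}$ lies in $(-\infty,-(V-\w)]$ and everything in $(0,\infty)$ is discrete. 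Hence the positive eigenvalues $\mu_0\ge\mu_1\ge\cdots$ are captured by the Courant--Fischer min--max principle applied to the quadratic form $q(\phi)=\int_{\R}\bigl(-\phi_x^2+p(x)\phi^2\bigr)\,dx$, and $P\ge 2$ is equivalent to $\mu_1>0$. The strategy is therefore to exhibit a two–dimensional subspace of $H^1$ on which $q$ is positive definite.

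The natural candidates are the two lowest-energy modes dictated by the symmetry: an even one and an odd one. For the odd mode I would take $\phi=u_x$. It lies in $H^1$, it satisfies $D_{+}u_x=0$ in each of the three regions (it is the equation of variations along $u$), but $u_{xx}$ jumps across $x=\pm L$ because the vector field itself is discontinuous there. The key computation is to evaluate $q(u_x)$ in the presence of these corners. Integrating by parts region by region and collecting the boundary contributions at $\pm L$ (each weighted by $u_x(\pm L)=y(\pm L)$) yields, after the signs are handled correctly,
\[
q(u_x)=2\,y_0\,u_0\bigl(2u_0^2-V\bigr),
\]
where $u_0(2u_0^2-V)=\bigl(u_0^3-\w u_0\bigr)-\bigl((V-\w)u_0-u_0^3\bigr)$ is exactly the jump in the vertical phase velocity $y_x$ across the flip point $(u_0,y_0)$ — the amount by which the inner field (\ref{eq:inner}) bends the orbit more sharply than the outer field (\ref{eq:outer}). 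Since $y_0,u_0>0$, this is positive precisely when $u_0>\sqrt{V/2}$, which is the second hypothesis of the theorem. With $q(u_x)>0$ established, the supremum characterization gives $\mu_0\ge q(u_x)/\|u_x\|^2>0$; the top eigenfunction $\phi_0$ is nodeless, hence even, so $q(\phi_0)=\mu_0\|\phi_0\|^2>0$. Parity then forces the cross term $\beta(\phi_0,u_x)=\mu_0\langle\phi_0,u_x\rangle$ to vanish, so on $\mathrm{span}\{\phi_0,u_x\}$ the form is diagonal with positive entries. This two–dimensional subspace is positive definite, whence $\mu_1>0$ and $P\ge 2$, giving instability.

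Finally I would dispatch the first hypothesis by showing it is exactly the feasibility condition for the second. On the homoclinic (soliton) orbit (\ref{eq:homo}) the amplitude ranges over $u\in[0,\sqrt{2(V-\w)}]$, since $y=0$ there at $u=\sqrt{2(V-\w)}$. A flip point with $u_0>\sqrt{V/2}$ can occur on this orbit if and only if $\sqrt{V/2}<\sqrt{2(V-\w)}$, which rearranges precisely to $\w/V<3/4$. Thus the two conditions in the theorem together assert that the symmetric orbit leaves the homoclinic beyond the curvature–reversal threshold $u_0=\sqrt{V/2}$, and that such a configuration is geometrically possible.

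The step I expect to be delicate is the evaluation of $q(u_x)$ across the corners: $u_x$ is only $C^0$ at $\pm L$, so the naive identity $q(\phi)=\langle D_{+}\phi,\phi\rangle$ fails and one must keep the boundary/jump terms, being careful that $D_{+}=\partial_{xx}+p$ gives $(u_x)_{xx}=-p\,u_x$ (a sign that is easy to drop and that flips the final answer). The remaining points to nail down are standard but worth stating explicitly: that the top eigenvalue is simple with an even, nodeless eigenfunction, and that the essential spectrum lying strictly below $0$ legitimizes counting two \emph{discrete} positive eigenvalues via min--max. Connecting the algebraic condition $u_0>\sqrt{V/2}$ back to the ``symmetric about the $u$-axis, leaving the homoclinic at $(u_0,y_0)$'' description in the statement is then immediate from the phase portrait of (\ref{fiberdyn}).
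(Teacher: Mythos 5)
Your route is genuinely different from the paper's: the paper proves $P\ge 2$ by a topological count, tracking how many times a tangent vector is pushed through the vertical as its base point traverses the composite orbit (splitting the count into $A_1+A_2+A_3$ over the three regions), whereas you build a two--dimensional subspace on which the quadratic form of $D_+$ is positive definite. Your central computation is correct: integrating by parts on the three smooth pieces and using $D_+u_x=0$ there leaves only the corner contributions, $q(u_x)=u_x(-L)\bigl[u_{xx}(-L^+)-u_{xx}(-L^-)\bigr]+u_x(L)\bigl[u_{xx}(L^+)-u_{xx}(L^-)\bigr]=2y_0u_0(2u_0^2-V)$ for the symmetric orbit with $(u_1,y_1)=(u_0,-y_0)$, and the packaging (even nodeless top eigenfunction, parity orthogonality to the odd test function $u_x$, essential spectrum below $-(V-\w)<0$ so min--max detects genuine discrete eigenvalues) is sound. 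The feasibility remark that $u_0>\sqrt{V/2}$ is attainable on the homoclinic orbit iff $\w/V<3/4$ also matches the paper.

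The gap is the unexamined assertion that $y_0>0$. The theorem only requires $u_0>\sqrt{V/2}$, and the paper's proof splits into two cases: Case~1, $\sqrt{V/2}<u_0<\sqrt{\w}$, where indeed $y_0>0$ and your argument closes; and Case~2, $u_0>\sqrt{\w}$, where the flip point necessarily lies on the \emph{lower} branch of the homoclinic orbit, $y_0<0$ (with $y_0>0$ and $u_0>\sqrt{\w}$ the inner field has $u_x>0$ and $y_x=u(u^2-\w)>0$, so the inner arc never turns around and cannot return to $(u_0,-y_0)$). In Case~2 your own formula gives $q(u_x)=2y_0u_0(2u_0^2-V)<0$, so $u_x$ fails to certify a positive direction and no two--dimensional positive subspace is produced; these are exactly the double--humped positive profiles (e.g.\ the solution at point C of Fig.~\ref{fig2_num}, which Remark~\ref{rem1} assigns to Case~2) that the theorem also declares unstable. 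As written, your proof therefore establishes only part of the statement. To repair it you would need a different pair of trial functions when $y_0<0$, or you must fall back on the paper's rotation count, which treats both signs of $y_0$ uniformly: there $A_1\ge 1$ comes for free because the orbit passes over the top of the homoclinic before flipping, and the second zero is extracted from the sign of $\Phi^I(\beta^O)(u_1,y_1)\times\textbf{b}^O(u_1,y_1)$ together with Lemma~\ref{lem:asym}.
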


The theorem will be proved by showing that $P \geq 2$ for such a solution. From Sturm-Liouville theory we have that $P$ will be the number of zeros of a solution to
the variational equation $D_+ v = 0$ satisfying the boundary conditions $ v
\to 0$ as $x \to - \infty$.  A solution to the variational equation associated with a solution of (\ref{stat1}) can be found by following a tangent vector around the orbit
under the flow of the linear equations (\ref{fiberdyn}). Satisfying the initial conditions means the solution will be a tangent vector to the orbit of the solution of (\ref{stat1}) in the phase plane until the
discontinuity of (\ref{stat1}). The number of zeros of such a solution can be found by determining the number of times that such a vector must pass through the vertical
as the base point ranges over the entire orbit.

Denote by $\textbf{b}^O(u,y) = (\textbf{b}^O_1(u,y),\textbf{b}^O_2(u,y)) $ a tangent vector to the homoclinic orbit of
the outer system at the point $(u,y)$, and let $\textbf{b}^I(u,y) = (\textbf{b}^I_1(u,y),\textbf{b}^I_2(u,y))$, be a
tangent vector to an orbit of the inner system at the point $(u,y)$. As $(u,y)$ varies along the
homoclinic orbit, $\textbf{b}^O(u,y) = (p,q)$ solves the linear system:
\begin{equation}
\begin{array}{lll}
p_x= q, \\
q_x = (V-\omega)p-3u^2p,
\end{array}
\label{varouter}
\end{equation}
\noindent while $\textbf{b}^I(u,y)$ solves the linear system:
\begin{equation}
\begin{array}{lll}
p_x = q, \\
q_x =
   -\omega p+3u^2p,
\end{array}
\label{varinner}
\end{equation}
\noindent as $(u,y)$ travels along a curve given by (\ref{eq:innerC})
The composite linear variational equation is then given by
\begin{equation}
\begin{array}{lll}
p_x = q, \\
q_x = \left\{\begin{array}{lll}
   (V-\omega)p-3u^2p, & |x|>L, \\
   -\omega p+3u^2p, &|x|<L.
  \end{array}\right.
\end{array}
\label{vardyn}
\end{equation}
We denote the explicit vectors, $\textbf{b}^O(u_0,y_0)$ and
$\textbf{b}^I(u_1,y_1)$, by $\beta^O$ and $\beta^I$ respectively. Let $F^O$ and $F^I$ denote the flow of the outer (\ref{eq:outer}) and inner (\ref{eq:inner})
systems respectively and $\Phi^O$ and $\Phi^I$ denote the variational flow of the outer and inner systems along solutions. If $\textbf{d}(u,y) = (\textbf{d}_1(u,y),\textbf{d}_2(u,y))$ is a vector in the tangent space of the
phase plane at the point $(u,y)$, and if $(u,y)$ is flowed under $F^O$ and $F^I$ to $F^O(u,y)$ and $F^I(u,y)$
respectively, let $\Phi^O(\textbf{d})(F^O(u,y))$, $\Phi^I(\textbf{d})(F^I(u,y))$ denote the image of the vector
$\textbf{d}$ under the flow of the outer and inner variational systems respectively.

Before moving to the proof of the theorem we need two geometric facts about the flows $\Phi^O$ and $\Phi^I$.
\begin{fact} \label{fact:pres} The variational flows $\Phi^O$ and $\Phi^I$ map the tangent line to a solution to (\ref{eq:outer}) and (\ref{eq:inner}) at a point $(u,y)$, to
the tangent space of the solution at the point $F^O(u,y)$ and $F^I(u,y)$ respectively.
\end{fact}
\begin{fact}\label{fact:ori}
The flows $\Phi^O$ and $\Phi^I$ are orientation preserving.
\end{fact}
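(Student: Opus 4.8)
The plan is to read the claim off directly from the trace-free (Hamiltonian) structure of the variational systems, with no hard analysis required. Each of (\ref{varouter}) and (\ref{varinner}) is a linear, non-autonomous system $v_x = A(x)\,v$ for $v = (p,q)$ in the tangent plane, with coefficient matrices
\[
A^O(x) = \left(\begin{array}{cc} 0 & 1 \\ (V-\omega) - 3u(x)^2 & 0 \end{array}\right), \qquad A^I(x) = \left(\begin{array}{cc} 0 & 1 \\ -\omega + 3u(x)^2 & 0 \end{array}\right),
\]
where $u(x)$ is the fixed underlying solution of (\ref{eq:outer}) or (\ref{eq:inner}). By definition $\Phi^O$ and $\Phi^I$ are the fundamental solution matrices of these systems, normalized to the identity at the base point $x_0$, and a linear map of $\R^2$ preserves orientation exactly when its determinant is positive; so it suffices to control $\det \Phi^O$ and $\det \Phi^I$ along the orbit.

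The key observation is that $\mathrm{tr}\, A^O \equiv 0$ and $\mathrm{tr}\, A^I \equiv 0$, since the equation $p_x = q$ contributes a zero diagonal entry and the equation for $q_x$ carries no $q$-dependence. First I would invoke Liouville's formula: for a fundamental matrix $\Phi$ of $v_x = A(x)\,v$ with $\Phi(x_0) = I$, one has
\[
\det \Phi(x) = \exp\left( \int_{x_0}^{x} \mathrm{tr}\, A(s)\, ds \right).
\]
Because the integrand vanishes identically in both cases, $\det \Phi^O(x) = \det \Phi^I(x) = 1$ for every $x$. Being everywhere equal to $1$, and in particular positive, both variational flows preserve orientation, which is exactly the assertion.

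Equivalently, and perhaps more conceptually, I would note that (\ref{eq:outer}) and (\ref{eq:inner}) are Hamiltonian, with Hamiltonians $\frac12 y^2 - \frac12 (V-\omega) u^2 + \frac14 u^4$ and $\frac12 y^2 + \frac12 \omega u^2 - \frac14 u^4$ respectively; the linearization of a Hamiltonian flow along any solution is again Hamiltonian, so $\Phi^O$ and $\Phi^I$ are symplectic and hence area-preserving, which for a planar flow is the same as the determinant being $+1$. There is no serious obstacle in this argument: the only point needing care is verifying that the variational coefficient matrices are genuinely trace-free, and this is immediate from the displayed form of (\ref{varouter})--(\ref{varinner}).
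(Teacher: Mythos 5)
Your proof is correct, and it is worth noting that the paper itself supplies no proof of this Fact: it is simply asserted, with only its working consequence (\ref{eq:ori}) --- sign-preservation of cross products under $\Phi^O$ and $\Phi^I$ --- recorded for later use. Your Liouville argument is the standard justification the authors are implicitly relying on, and you have executed it correctly: the coefficient matrices of the variational systems (\ref{varouter}) and (\ref{varinner}) are trace-free because $p_x=q$ contributes a zero diagonal entry and the $q_x$ equations contain no $q$ term, so Abel's/Liouville's formula gives $\det\Phi^O(x)=\det\Phi^I(x)\equiv 1$, and a planar linear map with positive determinant preserves orientation. Two small observations, both in your favor. First, your conclusion is slightly stronger than what the paper states: since $\Phi\mathbf{b}\times\Phi\mathbf{d}=\det(\Phi)\,(\mathbf{b}\times\mathbf{d})$, the determinant being identically $1$ means the cross product is preserved \emph{exactly}, not merely in sign; only the sign is used in (\ref{eq:ori}) and in the counting arguments of Theorem \ref{th:main}, but the exact invariance makes those computations transparent. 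Second, although the Fact concerns $\Phi^O$ and $\Phi^I$ separately, the proofs later transport vectors through both regimes via compositions such as $\Phi^O(\Phi^I(\beta^O))$; your argument covers this immediately, since the composite flow of (\ref{vardyn}) across the jumps at $x=\pm L$ is a composition of unit-determinant maps and so is itself orientation preserving. Your symplectic reformulation (the underlying systems (\ref{eq:outer}) and (\ref{eq:inner}) are Hamiltonian, the linearization along a solution is symplectic, and in the plane symplectic means area-preserving, i.e.\ determinant one) is logically equivalent to the Liouville computation but correctly identifies the structural reason the Fact holds; either version would serve as a complete proof where the paper offers none.
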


This second fact is reflected in the following way; if $\textbf{b}(u,y) = (b_1(u,y),b_2(u,y))$ and $\textbf{d}(u,y) = (d_1(u,y), d_2(u,y))$ are two vectors tangent to the phase space at
$(u,y)$, then the sign of the cross product of the two vectors is unchanged under the flows. That is:
\begin{equation}\label{eq:ori}
\textrm{sgn} ( \textbf{b}(u,y) \times \textbf{d}(u,y) ) = \textrm{sgn} ( \Phi^O( \textbf{b})(F^O(u,y)) \times \Phi_{O} (\textbf{d})(F^O(u,y)))
\end{equation}
\noindent For every point in the orbit of the flow $F^O$. The above also holds with the outer flow replaced by the inner flow.

A consequence of these two facts for the system of interest is the following lemma:
\begin{lemma}\label{lem:asym} If $\textbf{d}(u,y)$ is a vector in the tangent space to the phase plane of the homoclinic orbit at the point $(u,y)$, with $u>0$ such that
$\textbf{d}(u,y) \times \textbf{b}^O(u,y) > 0 $, then
\begin{equation}
\lim_{x \to \infty} \Phi^O(\frac{\textbf{d}}{|\textbf{d}|})(F^O(u,y)) = k \left(\begin{array}{ccc} 1 \\ \sqrt{V-\w } \end{array}\right)
\end{equation}
\noindent where $k$ is a positive real number.
\end{lemma}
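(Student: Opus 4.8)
The plan is to reduce everything to the behaviour of the variational system (\ref{varouter}) near the rest point at the origin, where it becomes asymptotically autonomous, and then to fix the sign of the limit using the orientation‑preserving property (Fact \ref{fact:ori}).

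First I would record the asymptotics. Because the base point is transported forward under $F^O$ along the homoclinic orbit, $F^O(u,y)\to(0,0)$ as $x\to\infty$, and along the way $u\to 0$. Hence the coefficient $3u^2$ in (\ref{varouter}) vanishes in the limit and the equation becomes asymptotically the constant‑coefficient system $p_{xx}=(V-\w)p$. Its two characteristic lines in the $(p,q)=(p,p_x)$ plane are the growing line through $(1,\sqrt{V-\w})$ (exponent $+\sqrt{V-\w}$) and the decaying line through $(1,-\sqrt{V-\w})$ (exponent $-\sqrt{V-\w}$). Since the perturbation $u^2$ decays exponentially, Levinson's theorem guarantees a fundamental system asymptotic to $e^{\pm\sqrt{V-\w}\,x}$ times these eigenvectors; consequently every solution of (\ref{varouter}) is asymptotic to the growing line unless it lies in the one‑dimensional decaying subspace.

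Next I would identify $\textbf{b}^O$ as that decaying solution. Differentiating (\ref{eq:outer}) along the orbit shows the velocity vector $\textbf{b}^O$ solves (\ref{varouter}) — this is exactly Fact \ref{fact:pres} applied to the tangent‑to‑orbit vector and gives $\Phi^O(\textbf{b}^O)(F^O(u,y))=\textbf{b}^O(F^O(u,y))$. As the orbit returns to the origin, $\textbf{b}^O\to 0$, so it is the decaying solution; evaluating $(y,(V-\w)u-u^3)$ on the fourth‑quadrant tail ($u>0$ small, $y\approx-\sqrt{V-\w}\,u$) shows its limiting direction is the specific ray $(-1,\sqrt{V-\w})$. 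The hypothesis $\textbf{d}\times\textbf{b}^O>0$ forces $\textbf{d}$ to be non‑parallel to $\textbf{b}^O$, hence the solution started from $\textbf{d}$ is independent of the decaying solution and carries a nonzero growing component. That component dominates, so after normalization $\Phi^O(\textbf{d})$ converges to $\epsilon(1,\sqrt{V-\w})$ with $\epsilon=\pm1$; it remains to pin $\epsilon$.

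The sign is where the orientation argument does the work, and this is the step I expect to be the crux. By Fact \ref{fact:ori}, $\textrm{sgn}\big(\Phi^O(\textbf{d})\times\Phi^O(\textbf{b}^O)\big)=\textrm{sgn}(\textbf{d}\times\textbf{b}^O)=+1$ for every $x$. Letting $x\to\infty$, $\Phi^O(\textbf{b}^O)$ points along $(-1,\sqrt{V-\w})$ and $\Phi^O(\textbf{d})$ along $\epsilon(1,\sqrt{V-\w})$; since $(1,\sqrt{V-\w})\times(-1,\sqrt{V-\w})=2\sqrt{V-\w}>0$, the limiting cross product has sign $\epsilon$, and matching it to $+1$ forces $\epsilon=+1$. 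Thus the limiting direction is $k(1,\sqrt{V-\w})$ with $k>0$, which is the claim (the normalization $\textbf{d}/|\textbf{d}|$ merely fixes the input scale, and the limit should be read as convergence of direction, the magnitude growing like $e^{\sqrt{V-\w}\,x}$). The only delicate point is ruling out that the decaying part of $\textbf{d}$ flips the cross‑product sign at some finite $x$; this needs no separate estimate, as Fact \ref{fact:ori} holds the sign constant along the entire flow.
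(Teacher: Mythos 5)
Your proposal is correct and follows essentially the same route as the paper: identify $\textbf{b}^O$ as the decaying solution of (\ref{varouter}) with limiting direction $(-1,\sqrt{V-\omega})$, note that any vector transverse to it must converge (in direction) to the unstable eigenvector $(1,\sqrt{V-\omega})$, and then pin down the sign of $k$ by passing the orientation-preservation identity of Fact \ref{fact:ori} to the limit, where the cross product evaluates to $2k\sqrt{V-\omega}>0$. The only difference is that you supply the asymptotic justification (Levinson-type dichotomy) that the paper leaves implicit.
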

\begin{proof}Because $\textbf{b}^O(u,y)$ is a tangent vector to the homoclinic orbit, and $u>0$,
\begin{equation}
\lim_{x \to \infty} \Phi^O(\frac{\textbf{b}^O)}{|\textbf{b}^O|})(F^O(u,y)) = \lim_{x \to \infty} \frac{\textbf{b}^O}{|\textbf{b}^O|}(F^O(u,y)) = \frac{1}{\sqrt{1+V-\w}}\left(\begin{array}{ccc} -1 \\ \sqrt{V-\w } \end{array}\right)
\end{equation}
\noindent and since $\textbf{d}$ is not tangent to the homoclinic orbit,
\begin{equation}
\lim_{x \to \infty} \Phi^O(\frac{\textbf{d}}{|\textbf{d}|})(F^O(u,y)) = k \left(\begin{array}{ccc} 1 \\ \sqrt{V-\w } \end{array}\right),
\end{equation}
\noindent where $k$ is some real number. This is because $\left(\begin{array}{ccc} 1 \\ \sqrt{V-\w } \end{array}\right)$ is the
unstable asymptotic eigenvector of (\ref{varouter}). But from the first and second fact we have that
\begin{equation}
0 < \lim_{x\to \infty} \Phi^O(\frac{\textbf{d}}{|\textbf{d}|})(F^O(u,y)) \times \frac{\textbf{b}}{|\textbf{b}|}(F^O(u,y)) = \frac{2k\sqrt{V-\w}}{\sqrt{1+V-\w }},
\end{equation}
\noindent which means that $k$ must be positive.
\end{proof}

In order to determine the number of zeroes of a solution to $D_+ v = 0$, and $v \to 0$ as $x \to -\infty$,  we need to
determine the number of times a vector gets pushed through the vertical by the flow as its base point moves along the
orbit. In order to count this, we break the orbit up into three parts. These are when $x<-L$, $|x|<L$ and $x>L$. Let
$A_1$ denote the number of zeros of the solution to the variational equation as the basepoint corresponds to the range
$x<-L$, $A_2$ the number for $|x|<L$, and $A_3$ for $x>L$. We have the following:
\begin{itemize}
\item The number $A_1$ is the number of times $\textbf{b}^O(u,y)$, a vector tangent to the homoclinic orbit passes
through the vertical as $(u,y)$ travels from $(0,0)$ along the homoclinic orbit of the outer system to the point
$(u_0,y_0)$.
\item The number $A_2$ is the number of times $\Phi^I(\beta^O)(u,y)$ passes through the vertical as $(u,y)$ travels
from $(u_0,y_0)$ along the path $y^2 = \frac{u^4}{2} - \w u^2 + V u_0^2 - u_0^4$ to the point $(u_1,y_1)$.
\item The number $A_3$ is the number of times $\Phi^O(\Phi^I(\beta^O)(u_1,y_1))(u,y)$ passes through the vertical as
$(u,y)$ travels along the homoclinic orbit of the outer system from $(u_1,y_1)$ to $(0,0).$
\end{itemize}

The number of zeros of a solution of (\ref{vardyn}) will then be $ P = A_1+A_2+A_3$. Theorem \ref{th:ckrtj88} then says that
if $u(x)>0$ and if $A_1+A_2+A_3 \geq 2$, then the underlying orbit in the phase plane represents an unstable standing wave solution of (\ref{gov1}).
We are now ready to prove Theorem \ref{th:main}.
\begin{proof}[Proof of Theorem \ref{th:main}]
We have two cases to consider, the first is when $\sqrt{\frac{V}{2}}<u_0<\sqrt{\w}$ and
the second is when $u_0 > \sqrt{\w}$. In both cases however $u_0 > \sqrt{V-\w}$.

\begin{figure}[tbp!]
\begin{center}
\subfigure[]{\includegraphics[height= 2.in]{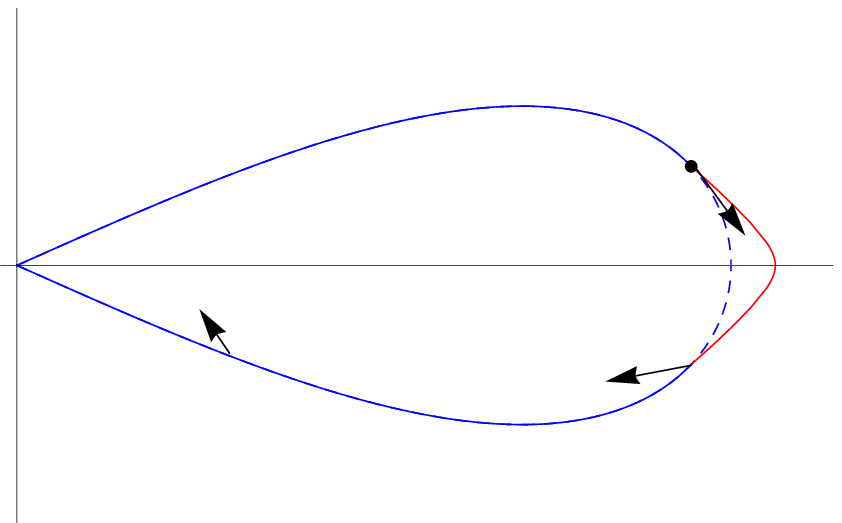}
\label{fig1:subfig1}}
\subfigure[]{\includegraphics[height= 2.in]{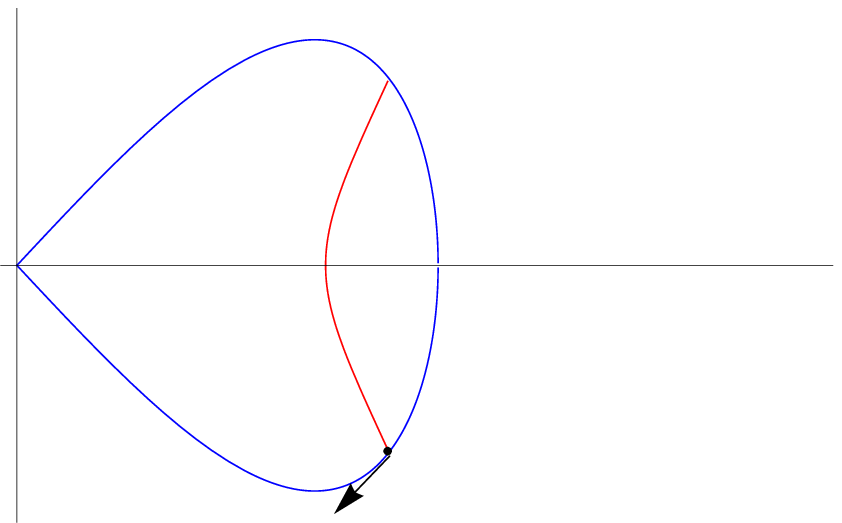}
\label{fig1:subfig2}}
\end{center}
\caption{(a) An unstable orbit of the first type. The dashed line is the homoclinic orbit of the outer system. The dot represents the jumping off point. The black arrows represent the image of a tangent vector to a solution under the variational flows $\Phi^{O,I}$ at various values. (b) An unstable orbit of the second type. The dot represents the jumping off point. The black arrow represents where $\Phi(\bf{b}^O)$ stops being tangent to the orbit of the solution.}
\end{figure}

\textbf{Case 1.} Here $\frac{V}{2}<u_0<\sqrt{\w}$. An example is as in figure \ref{fig1:subfig1}. Note that in this case $y_0> 0$. We show that $A_2 \geq 1$ and $A_2+A_3 \geq 2$, so the solution  represents an unstable standing wave.
First we show that $A_2 \geq 1$. Notice that $\beta^O = (y_0, (V-\w)u_0-u_0^3)$, and that the first coordinate is positive while the second coordinate of
$\beta^O$ is negative since $\frac{V}{2}<u_0<\sqrt{\w}$. Now notice that $\textbf{b}^I(u_0,y_0) = (y_0, u_0^3 - \w
u_0)$, and again the second coordinate $u_0 > \sqrt{V-\w}$. This means that $\beta^O \times \textbf{b}^I(u_0,y_0)  = y_0u_0(2u_0^2-V>
0$, because $u_0> \sqrt{\frac{V}{2}}$ and $y_0$ is positive. Now we flow both vectors along the inner system until we get to the point where the solution to the original equation crosses the $u$ axis in the phase plane of the system described by (\ref{fiberdyn}). Denote this point by $(u_{\textrm{max}},0)$. We then have the following:
\begin{equation*}
0 < \Phi^I(\beta^O)(u_{\textrm{max}},0) \times \textbf{b}^I(u_{\textrm{max}},0) = (u_{\textrm{max}}^3 - \w u_{\textrm
{max}})\Phi^I_1(\beta^O)(u_{\textrm{max}},0).
\end{equation*} Since $u_0<\sqrt{w}$ this means that $u_{\textrm{max}} = \sqrt{\w - \sqrt{\w^2 - 2(Vu_0^2 + u_0^4)}}< \sqrt{\w}$ and so the first coordinate of
$\Phi^I(\beta^O)(u_{\textrm{max}},0)$, $\Phi^I_1(\beta^O)(u_{\textrm{max}},0) < 0.$ But this means that the flow has
pushed the original tangent vector to the homoclinic orbit through the vertical at least once by this point, since the sign of the first coordinate has changed, so $A_2
\geq 1$.

To study $A_3$, we must determine the number of times that $\Phi^O(\Phi^I(\beta^O)(u_1,y_1))(u,y)$ passes through the vertical as the base point travels along the homoclinic orbit from $(u_1,y_1)$ to $(0,0)$.  We first apply the result of the lemma to the vector $\beta^I$. Observe that $\beta^I =
(-y_0,u_0^3-\w u_0)$, (since in this case $(u_1,y_1) = (u_0,-y_0)$) and that the first, and second coordinates of $\beta^I$ are negative. Now
$\beta^I \times \textbf{b}^O(u_1,y_1) = y_1(Vu_1-2u_1^3) > 0$, since $u_1 =u_0> \sqrt{\frac{{V}}{2}}$. So applying the
result of lemma \ref{lem:asym} means that there is a point on the homoclinic orbit $(u',y')$ say, where $\Phi^O(\beta^I)(u',y')$
is pointing vertically upward. Write $\Phi^O(\beta^I)(u',y') := (0, a)$, with $a$ positive. Now we apply the fact that the variational
flows are orientation preserving. We have
\begin{equation*}
0< \Phi^O(\Phi^I(\beta^O)(u_1,y_1))(u',y') \times (0,a) = a(\Phi^O_1(\Phi^I(\beta^O)(u_1,y_1))(u',y'))
\end{equation*}
\noindent which means that the first coordinate of $\Phi^O(\Phi^I(\beta^O)(u_1,y_1))(u',y')$ must be positive at this point. The first coordinate under the variational flow
$\Phi^I_1(\beta^O)$ was shown to be negative above, so this means that the vector $\Phi^I(\beta^O)$ must have either passed through the vertical once more,
or been pushed through the vertical by $\Phi^O$, by this point, so $A_2 + A_3 \geq 2$.

Thus we have that the number of zeros to the variational equation must be greater than or equal to two, so the corresponding orbit must represent an unstable standing wave.

\textbf{Case 2} In this instance $u_0 > \sqrt{\w}$. An example of the phase portrait of an orbit of this type is in
figure \ref{fig1:subfig2}. Here we remark that since $u_0 > \sqrt{\w}$ this means that $y_0$ must be negative. In
this case we show that $A_1 \geq 1$ and that $A_1+ A_2+ A_3 \geq 2$.

To see that $A_1 \geq 1$, we note that the tangent line to the homoclinic orbit is vertical when the homoclinic
orbit crosses the $u$ axis. As the base point moves along the homoclinic orbit from $(0,0)$ to $(u_0,y_0)$, it must cross the $u$ axis,  so $A_1 \geq 1$.

Now we study $A_3$. Call the point where the homoclinic orbit crosses the $u$-axis $(u_{\textrm{max}},0)$. Note that $\beta^O = (y_0, (V-w)u_0 - u_0^3)$ and that the sign of the first coordinate is negative. The sign of $\Phi^I(\beta^O)(u_1,y_1) \times \textbf{b}^O(u_1,y_1)$ can be either positive, negative or zero. If it is positive apply lemma \ref{lem:asym} to see that
\begin{equation*}
\lim_{x \to \infty} \frac{\Phi^O(\Phi^I(\beta^O)(u_1,y_1))(u,y)}{|\Phi^O(\Phi^I(\beta^O)(u_1,y_1))(u,y)|} = \left(\begin{array}{ccc} k \\ k \sqrt{V-\w } \end{array}\right),
\end{equation*}
\noindent where $k$ is positive, so the flow must have pushed the vector through the vertical once more, since the first coordinate in the limiting vector is positive. Thus $A_1+A_2+A_3 \geq 2$.

If $\Phi^I(\beta^O)(u_1,y_1) \times \textbf{b}^O(u_1,y_1) = 0$, then $\Phi^I(\beta^O)(u_1,y_1)$ is tangent to the homoclinic orbit at $(u_1,y_1)$. But the flow $\Phi^O$
maps tangent vectors to tangent vectors, so $\Phi^O(\Phi^I(\beta^O)(u_1,y_1))(u_{\textrm{max}},0)$ is vertical in this case, so in this case one can see directly that
$A_3 \geq 1$.

Lastly if $\Phi^I(\beta^O)(u_1,y_1) \times \textbf{b}^O(u_1,y_1) < 0$, then denote $\textbf{b}^O(u_{\textrm{max}},0) := (0,-a)$ where $a$ is some positive number. We then have
\begin{equation*}
0 > \Phi^O(\Phi^I(\beta^O)(u_1,y_1))(u_{\textrm{max}},0) \times (0,-a) = (-a) \Phi^O_1(\Phi^I(\beta^O)(u_1,y_1))(u_{\textrm{max}},0).
\end{equation*}
This means that the first coordinate of the image of the vector under the flow is positive, so the vector must have passed through the vertical at some point. Thus $A_1+A_2+A_3 \geq 2$ in this case as well.

So we have that for solutions of this type the number of zeros of the associated variational equation is greater than or equal to two. This completes the proof of the theorem.
\end{proof}

\begin{figure}[tbhp!]
\begin{center}
\includegraphics[scale=0.5]{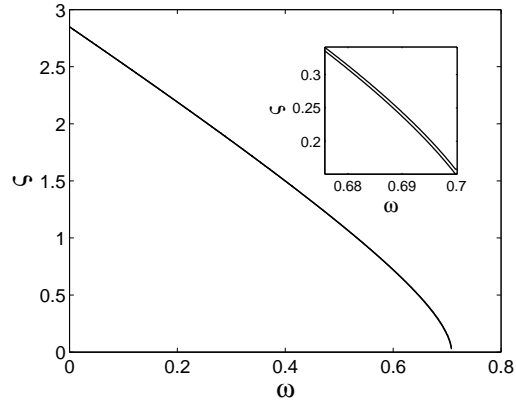}
\caption{The spectrum $\varsigma>0$ of $D_+$ for unstable symmetric solutions along the upper branch in Fig.\ \ref{fig2_num} as a function of $\omega$. The inset zooms in on a small region clearly showing that there are two curves.}
\label{fig2_P}
\end{center}
\end{figure}

\begin{remark}
\label{rem1}
Solutions shown in Fig.\ \ref{fig1_num}(b) and (c) (see also the corresponding phase-portraits in Fig.\ \ref{fig1_num_pp}(b) and (c)) belong respectively to case 1 and 2 in the proof of Theorem \ref{th:main} above. Let $\varsigma\in\R$ be the real part of a $\lambda$ in the spectrum of $D_+$ such that $D_+v(x)=\lambda v(x)$. In Fig.\ \ref{fig2_P} we plot the numerically obtained spectral parameter $\varsigma>0$ of symmetric solutions along the upper branch of Fig.\ \ref{fig2_num} as a function of $\omega$, from which one can see that $P=2$ indeed.
\end{remark}

So far we have been primarily concerned with positive solutions to (\ref{stat1}), but Theorem \ref{th:ckrtj88} applies equally well to standing waves $u(x)$ which are not strictly positive, so long as they are smooth enough and both $u$ and $u_x $ tend to 0 as $x \to \pm \infty$. Likewise, the techniques used to calculate $P$ in the proof of Theorem \ref{th:main} do not require that the solution be positive. As such, we can apply Theorem \ref{th:ckrtj88}, and the techniques above to establish the linear instability of some excited states with
$Q \geq 1$. The first excited standing wave we will deal with has $Q =1$.

\subsection{First excited unstable states}

\begin{figure}[tbp!]
\centering
\subfigure[]{\includegraphics[height= 2.0in]{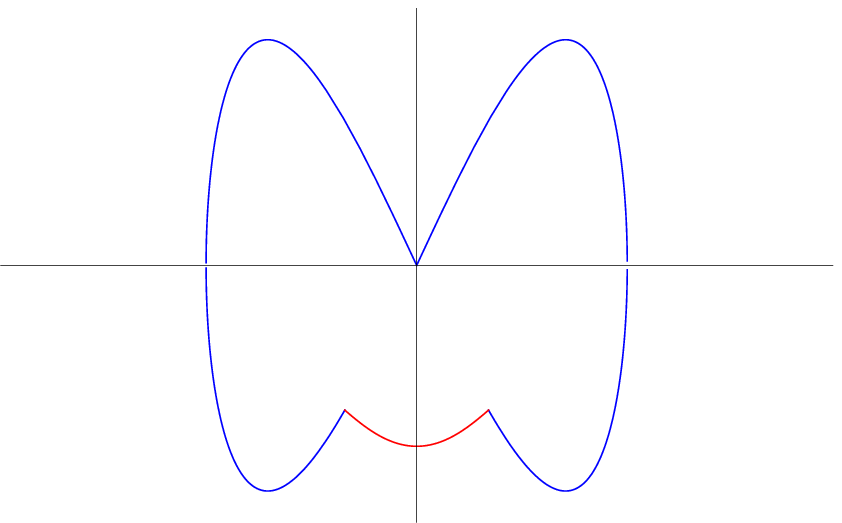}
\label{fig2:subfig1}
}
\subfigure[]{
\includegraphics[height= 2.0in]{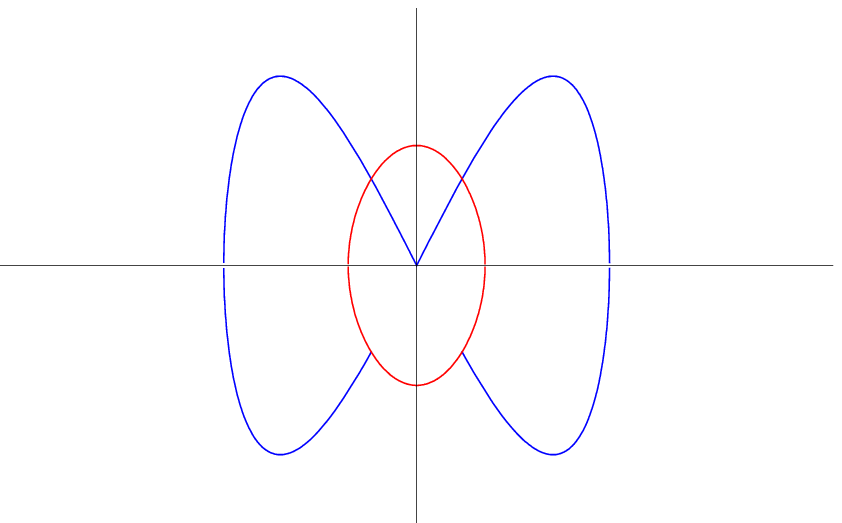}
\label{fig2:subfig2}
}
\label{fig2}
\caption{(a) An unstable orbit where $P=3$ and $Q =1$. (b) An unstable orbit where $P=2k+3$ and $Q =2k+1$ and $k \geq 1$. The number $k$ describes the number of periods the solution stays in the inner periodic orbit.}
\end{figure}

We will consider steady state solutions with the following properties
\begin{enumerate}
\item  The switch from the outer to the inner system takes place in the phase plane at $(u_0,y_0)$, with $y_0<0$ and $0<u_0 < \sqrt{\frac{V}{2}}$.
\item The solution returns to the outer homoclinic orbit at the point $(u_1,y_1) = (-u_0,y_0)$.
\end{enumerate}
Such a solution will be symmetric with respect to the $y$ axis in the phase plane, and such solutions exist for all values of $V$
and $\w$. An example of the phase portrait of such a solution can bee seen in figure \ref{fig2:subfig1}.

\begin{corollary}\label{cor:excited} Let $u(x)$ be as above, then $ P \geq 3$ and so we have a linearly unstable standing wave to equation (\ref{gov1}).
\end{corollary}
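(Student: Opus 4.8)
The plan is to establish the corollary with the same node-counting apparatus used for Theorem~\ref{th:main}, now carrying a nonzero $Q$. First I would observe that the standing wave $u(x)$ described above is odd in $x$: its phase orbit is symmetric under $(u,y)\mapsto(-u,y)$ and runs from $u>0$ to $u<0$, so $u$ has a single zero and $Q=1$. By Theorem~\ref{th:ckrtj88} it then suffices to prove $P\geq 3$, for then $P-Q\geq 2\neq 0,1$. Exactly as before I would write $P=A_1+A_2+A_3$, where $A_1,A_2,A_3$ record the vertical crossings of the flowed variational vector (the decaying-at-$-\infty$ solution of $D_+v=0$) as the base point traverses the arcs $x<-L$, $|x|<L$ and $x>L$ of the composite orbit.

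I would obtain the three counts as follows. For $A_1$: because $y_0<0$ the base point must pass over the peak of the soliton, where the homoclinic tangent is vertical, on its way from the origin to $(u_0,y_0)$; since the variational solution equals $\mathbf b^O$ on this arc, this yields $A_1\geq 1$, just as in Case~2 of Theorem~\ref{th:main}. For $A_2$: the initial cross product is $\beta^O\times\mathbf b^I(u_0,y_0)=y_0u_0(2u_0^2-V)$, which is positive here since $y_0<0$ and $0<u_0<\sqrt{V/2}$ make both factors negative; propagating this sign by Fact~\ref{fact:ori} and following the first coordinate of $\Phi^I(\beta^O)$ as the base point crosses the $y$-axis at $(0,y_m)$ should force a change of sign, so $A_2\geq 1$. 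For $A_3$: the reflection $(u,y)\mapsto(-u,y)$---available precisely because $u$ is odd and hence the potential of $D_+$ is even---carries the departure arc to the return arc along the left lobe, and a version of Lemma~\ref{lem:asym} transported to $u<0$ forces $\Phi^O(\Phi^I(\beta^O))$ through the vertical once more, giving $A_3\geq 1$. Adding the three contributions gives $P\geq 3$.

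The main obstacle is the $A_3$ count, since Lemma~\ref{lem:asym} is stated only for base points with $u>0$. Transporting it to the left lobe requires redoing the asymptotic-eigenvector analysis there: as $x\to+\infty$ along the left lobe $\mathbf b^O$ aligns with $(1,-\sqrt{V-\w})$ instead of $(-1,\sqrt{V-\w})$, so the sign relating the limiting constant $k$ to the cross product $\Phi^I(\beta^O)(u_1,y_1)\times\mathbf b^O(u_1,y_1)$ is reversed; alternatively one invokes the reflection symmetry directly, but that map reverses the flow direction, so the orientation bookkeeping of Facts~\ref{fact:pres}--\ref{fact:ori} must be handled carefully. The genuinely delicate point is to verify that this final cross product has the sign that pushes the vector through the vertical---equivalently, that the zeros of $v$ distribute as one per region---and here the constraint that the configuration is an actual standing wave (the inner transit time must equal $2L$) is what should exclude the degenerate distributions. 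Finally, the winding configuration of Figure~\ref{fig2:subfig2} would follow from the same analysis, the only change being that the $2k$ extra vertical crossings accrued inside the periodic inner orbit raise the inner count to $A_2=2k+1$, producing $P=2k+3$ against $Q=2k+1$.
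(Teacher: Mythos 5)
Your overall strategy is the paper's: the same decomposition $P=A_1+A_2+A_3$, the same argument that $A_1\geq 1$ (because $y_0<0$ the base point must pass the top of the right homoclinic lobe, where $\mathbf{b}^O$ is vertical), and the ``version of Lemma \ref{lem:asym} transported to $u<0$'' that you call for is precisely the paper's Lemma \ref{lem:asym2}, with the limiting constant $k$ now negative. Your sign computation $\beta^O\times\mathbf{b}^I(u_0,y_0)=y_0u_0(2u_0^2-V)>0$ is also the relevant one.

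The gap is in your middle crossing. At the point $(0,y_m)$ where the inner arc crosses the $y$-axis, the inner tangent is $\mathbf{b}^I(0,y_m)=(y_m,0)$, which is \emph{horizontal}; the preserved inequality $\Phi^I(\beta^O)(0,y_m)\times\mathbf{b}^I(0,y_m)=-y_m\,\Phi^I_2(\beta^O)(0,y_m)>0$ therefore pins down the sign of the \emph{second} coordinate of the flowed vector, not the first, so no passage through the vertical is forced there. Hence your claim $A_2\geq1$ is unsupported; and since your $A_3\geq1$ tacitly needs the first coordinate to have turned positive before the asymptotic regime takes over, the argument as written only delivers $P\geq2$, i.e.\ $P-Q\geq1$, which does not trigger Theorem \ref{th:ckrtj88}. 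The paper does not claim $A_2\geq1$ and $A_3\geq1$ separately: it tracks the sign of the first coordinate through the sequence $+,-,+,-$, obtaining the intermediate $+$ on the \emph{outer return arc} at the bottom of the left lobe, where $\mathbf{b}^O$ is vertical, via the preserved cross product $\Phi^I(\beta^O)(u_1,y_1)\times\mathbf{b}^O(u_1,y_1)>0$ --- the same inequality that feeds Lemma \ref{lem:asym2} and yields the final $-$. That gives $A_2+A_3\geq2$ and hence $P\geq3$. (Your closing remark that a periodic inner transit adds $2k$ to both $P$ and $Q$ does match the paper's subsequent treatment of higher excited states.)
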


In order to prove the corollary, we need the following lemma, which is the analogue to lemma \ref{lem:asym} in the half of the phase plane where $u<0$.
\begin{lemma}\label{lem:asym2} If $\textbf{d}(u,y)$ is a vector in the tangent space to the phase plane of the homoclinic orbit at the point $(u,y)$, with $u<0$ such that
$\textbf{d}(u,y) \times \textbf{b}^O(u,y) > 0 $, then
\begin{equation}
\lim_{x \to \infty} \Phi^O(\frac{\textbf{d}}{|\textbf{d}|})(F^O(u,y)) = k \left(\begin{array}{ccc} 1 \\ \sqrt{V-\w } \end{array}\right)
\end{equation}
\noindent where $k$ is a negative real number.
\end{lemma}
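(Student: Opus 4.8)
The plan is to follow the proof of Lemma~\ref{lem:asym} almost verbatim, the only essential change being the sign bookkeeping forced by working in the half-plane $u<0$. The homoclinic orbit of the outer system (\ref{eq:homo}) is invariant under $(u,y)\mapsto(-u,-y)$, so for $u<0$ it is the reflected loop: the orbit leaves the origin into the third quadrant along the unstable eigendirection and returns to the origin from the second quadrant along the stable eigendirection as $x\to\infty$. The first thing I would establish is the asymptotic behaviour of the normalized tangent vector $\textbf{b}^O$ under the variational flow. By Fact~\ref{fact:pres}, $\Phi^O$ keeps $\textbf{b}^O$ tangent to the orbit, and since the base point now approaches the origin through the second quadrant (so that $\textbf{b}^O=(y,\dots)$ has \emph{positive} first coordinate in the limit), I expect
\begin{equation*}
\lim_{x\to\infty}\Phi^O\left(\frac{\textbf{b}^O}{|\textbf{b}^O|}\right)(F^O(u,y)) = \frac{1}{\sqrt{1+V-\w}}\left(\begin{array}{c} 1 \\ -\sqrt{V-\w} \end{array}\right),
\end{equation*}
which is exactly the sign-flipped counterpart of the limit appearing in Lemma~\ref{lem:asym}.

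Second, because $\textbf{d}$ is transverse to the homoclinic orbit, its normalized image must converge to the unstable asymptotic eigenvector of (\ref{varouter}), i.e.
\begin{equation*}
\lim_{x\to\infty}\Phi^O\left(\frac{\textbf{d}}{|\textbf{d}|}\right)(F^O(u,y)) = k\left(\begin{array}{c} 1 \\ \sqrt{V-\w} \end{array}\right)
\end{equation*}
for some real $k$; the content of the lemma is that $k<0$. To pin down the sign I would invoke orientation preservation (Fact~\ref{fact:ori}), which propagates the hypothesis $\textbf{d}(u,y)\times\textbf{b}^O(u,y)>0$ to the limit. Evaluating the cross product of the two limiting vectors above gives
\begin{equation*}
0 < k\left(\begin{array}{c} 1 \\ \sqrt{V-\w} \end{array}\right)\times\frac{1}{\sqrt{1+V-\w}}\left(\begin{array}{c} 1 \\ -\sqrt{V-\w} \end{array}\right) = \frac{-2k\sqrt{V-\w}}{\sqrt{1+V-\w}},
\end{equation*}
and since $V-\w>0$ the right-hand side is a negative multiple of $k$, so positivity forces $k<0$, which is the claim.

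The main obstacle, and really the only place where genuine care is required, is the first step: correctly identifying the sign of the first component of the limiting tangent vector to the homoclinic loop in the region $u<0$. In Lemma~\ref{lem:asym} the orbit returns to the origin through the fourth quadrant, so this component is negative; here the reflected loop returns through the second quadrant, so it is positive, and it is precisely this reversal that flips the sign of the cross product and hence of $k$. I would therefore verify this asymptotic direction carefully, either directly from the linearization of (\ref{eq:outer}) at the origin or by appealing to the $(u,y)\mapsto(-u,-y)$ symmetry of (\ref{eq:homo}). Once the sign is fixed, the remainder is an immediate transcription of the earlier argument.
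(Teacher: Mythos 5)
Your proposal is correct and follows exactly the route the paper takes: the paper's own proof simply notes that the argument of Lemma \ref{lem:asym} carries over verbatim except that the limiting normalized tangent vector to the homoclinic loop in the half-plane $u<0$ is $\frac{1}{\sqrt{1+V-\w}}\left(\begin{array}{c} 1 \\ -\sqrt{V-\w}\end{array}\right)$ (written there as $\frac{1}{-\sqrt{1+V-\w}}\left(\begin{array}{c} -1 \\ \sqrt{V-\w}\end{array}\right)$), and this sign reversal forces $k<0$. Your identification of the second-quadrant return direction and the resulting cross-product computation $\frac{-2k\sqrt{V-\w}}{\sqrt{1+V-\w}}>0$ is precisely the intended argument, just spelled out in more detail than the paper bothers to.
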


The proof is exactly the same as in the proof of lemma \ref{lem:asym}, except that:
\begin{eqnarray}
\lim_{x \to \infty} \Phi^O(\frac{\textbf{b}^O)}{|\textbf{b}^O|})(F^O(u,y)) &= \lim_{x \to \infty} \frac{\textbf{b}^O}{|\textbf{b}^O|}(F^O(u,y))\nonumber\\ &= \frac{1}{-\sqrt{1+V-\w}}\left(\begin{array}{ccc} -1 \\ \sqrt{V-\w } \end{array}\right)
\end{eqnarray}
\noindent because we are in the left half plane. This change of sign therefore changes the sign of $k$.

It is straightforward to calculate that $A_1 \geq 1$. This is done in exactly the same way as in the proof of the second case of theorem \ref{th:main}.  Similarly, as in
the second case of theorem \ref{th:main}, one shows that $A_1+ A_2 + A_3 = P \geq 3$. This is done by using the cross product of the appropriate vectors under the
flows of the outer system, showing that the first coordinate of the tangent vector under the flow of the variational equation is first positive, then negative, then positive, and finally
ends up negative because of lemma \ref{lem:asym2}. These three sign changes mean that the vector has passed through the vertical at least three times, and hence
we have an unstable standing wave.

\begin{figure}[tbhp!]
\centering
\subfigure[]{\includegraphics[height= 2.5 in]{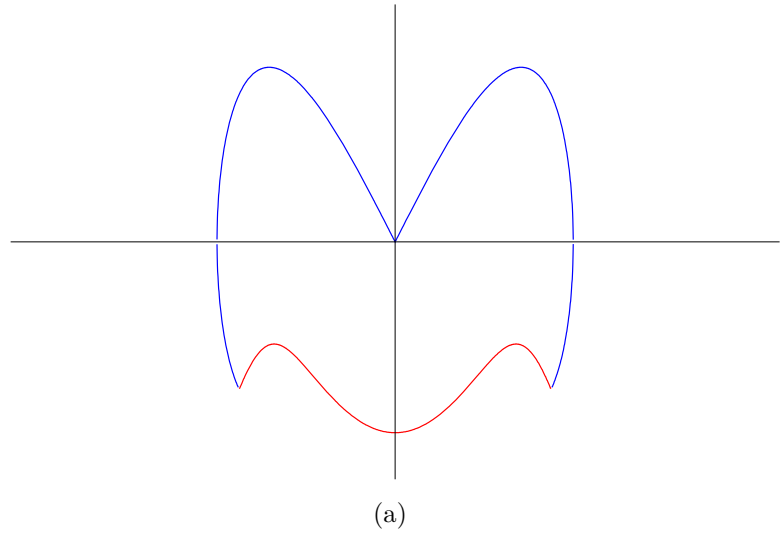}}
\subfigure[]{\includegraphics[scale= .5]{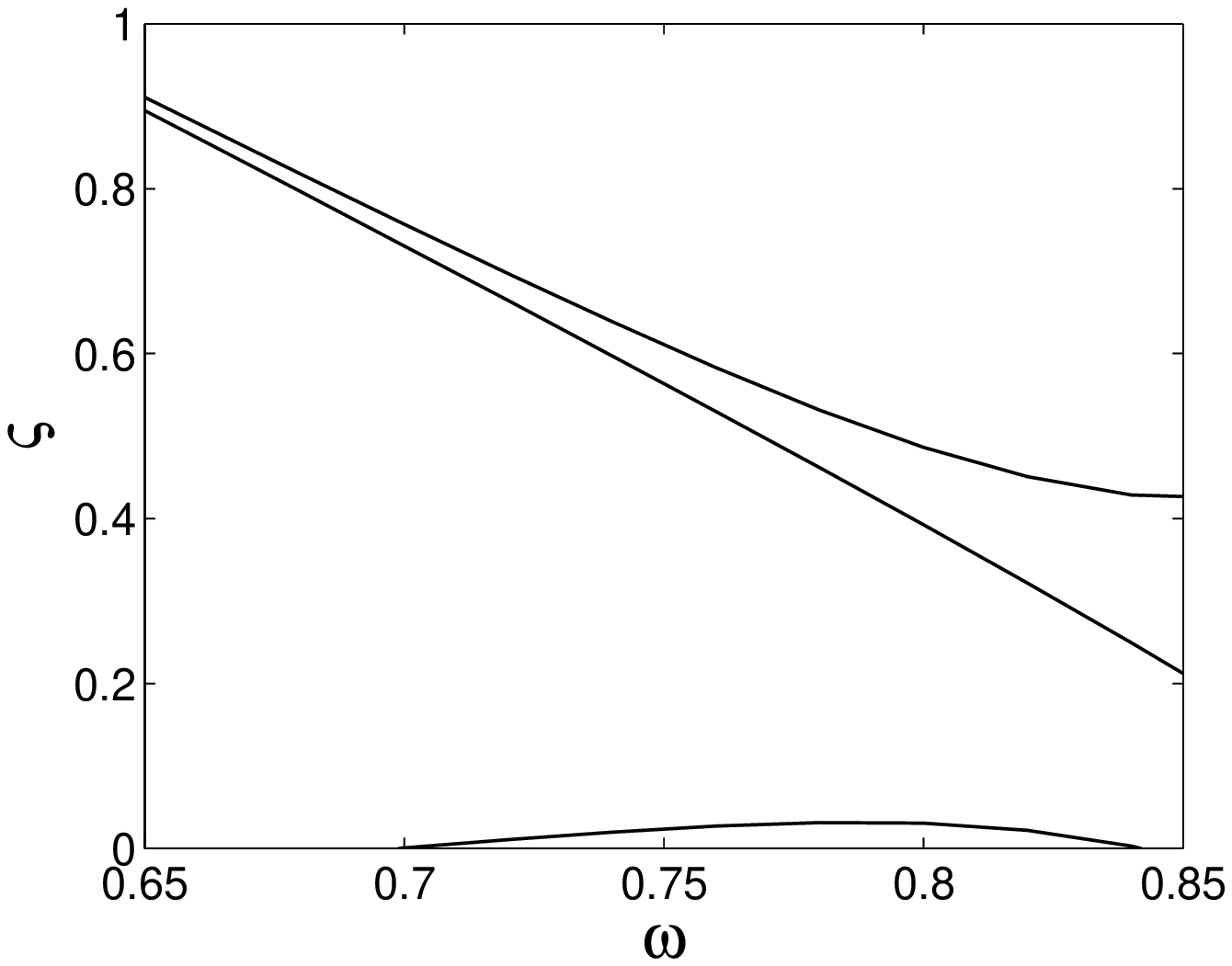}}
\caption{(a) A sketch of symmetric orbits with $P \geq 2$ and $Q =1$. (b) The same as in Fig.\ \ref{fig2_P}, but for unstable symmetric solutions in Fig.\ \ref{fig3_num}.}
\label{fig4:subfig1}
\end{figure}

\begin{remark}
\label{rem2}
The solution shown in Fig.\ \ref{fig4_num}(a) (see the corresponding phase-portrait in Fig.\ \ref{fig4_num_pp}(a)) belongs to the class of solutions discussed in Corollary \ref{cor:excited}. Nevertheless, the symmetric solution corresponding to point C in Fig.\ \ref{fig3_num} (cf.\ Fig.\ \ref{fig4_num}(c) and its phase-portrait \ref{fig4_num_pp}(c)) does not belong to the same class of solutions. 
In the phase-space, the orbit of the latter solution is sketched in Fig.\ \ref{fig4:subfig1}(a). This is similar to the configuration of the orbit in Corollary \ref{cor:excited}, which is a symmetric (about the $u$-axis) solution in the phase space with the properties $\frac{\w}{V} <\frac{2}{3} $, $y_0<0$ and $(u_1,y_1) = (-u_0, y_0)$, but here $u_0< \sqrt{\frac{V}{2}}$. In this case, the inner orbit lies outside the heteroclinic orbit of the inner system. In this configuration, we have that $Q = 1$ and $P \geq 2$. This implies that the above method cannot be used to prove the instability of such a solution. In Fig.\ \ref{fig4:subfig1}(b), we plot the positive spectrum $\varsigma>0$ of the operator $D_+$ for unstable symmetric solutions in Fig.\ \ref{fig3_num}, where one can see that $P=3$ only in some interval. We conjecture that the break down of our method occurs at the points where the branch containing point B and that containing points D and E emerge with the main branch corresponding to symmetric solutions.
\end{remark}

\subsection{Higher order excited states}

We remark that there are certain choices of $V$ and $\w$, and $u_0$ such that the inner part of the superimposed phase portrait will be part of a periodic orbit of the inner system. This applies for example if we keep the $(u_0,y_0)$ and $(u_1,y_1) = (-u_0, y_0)$ as in corollary \ref{cor:excited}, we can allow the trajectory to follow the periodic orbit for as many periods as we like, and can calculate the net effect on the stability of the standing wave as a whole. One can show that if the base point travels along the inner orbit described by (\ref{eq:inneru0}) from $(u_1,y_1)$  for exactly one period, the solution to the variational equation will have exactly 2 more zeros, as will the original solution. Thus for each period $P = Q = 2$, and so we still have an unstable excited solution. In this way we are capable of constructing unstable excited solutions with $2k+1$ zeros for an arbitrary $k$. Figure \ref{fig2:subfig2} is a phase portrait for such a standing wave solution.

It remains to determine which choices of $V$, $w$, and $u_0$ that will produce such an excited state. Using the geometry of the superposed phase portraits we have that such an excited steady state solution will exist when

\begin{equation}
\begin{array}{lll}
\frac{\w}{V} < \sqrt{\frac{1}{2}} & \textrm{and} & u_0 < \sqrt{\frac{{V - \sqrt{V^2-2\w^2}}}{2}} \textrm{,  or}\\
\frac{\w}{V} > \sqrt{\frac{1}{2}} & \textrm{and}  & u_0 < \sqrt{\frac{V}{2}}.
\end{array}
\end{equation}
\noindent The above technique can be used to create a higher order excited state whenever the inner part of the phase curve is part of a periodic orbit. We can simply follow the period as many times as is necessary and the higher order state will have the same difference of $P$ and $Q$. Thus if we begin with an unstable state where $P = P_0$ and $Q = Q_0$ being the initial values for $P$  and $Q$ and the inner part of the orbit is periodic (for another example c.f. figure(\ref{fig1:subfig1})), we can produce higher order unstable states with $ P = 2k+P_0$ and $Q = 2k+Q_0$, and for all integer values of $k$ we will still have an unstable state.

\section{Asymmetric states}

\begin{figure}[tbhp!]
\centering
\subfigure[]{
\includegraphics[height= 2.0in]{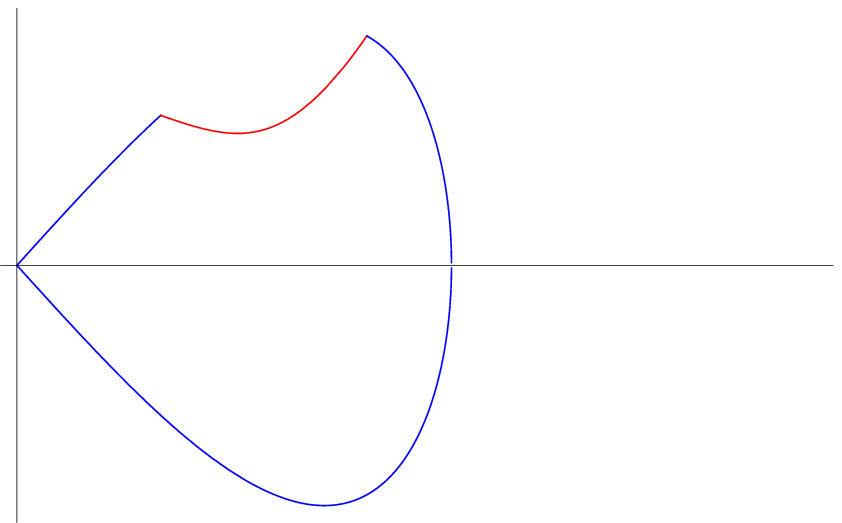}
\label{fig3:subfig1}
}
\subfigure[]{
\includegraphics[height= 2.0in]{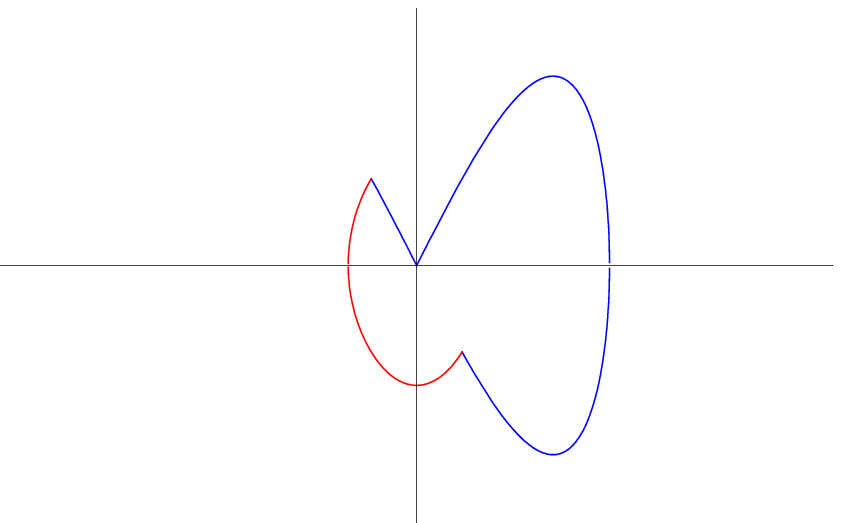}
\label{fig3:subfig2}
}
\label{fig3}
\caption{(a) An asymmetric orbit with $P \geq 1$ and $Q =0$. (b) An asymmetric orbit with $P = 2$ and $Q = 1$.}
\end{figure}

The techniques described and used in the section above cannot be applied to the asymmetric states considered in this paper because such states do not satisfy the condition in theorem \ref{th:ckrtj88}. As particular examples, we consider two solutions with $Q=0$ and $Q=1$.

For positive solutions $Q=0$, when $\frac{\w}{V} < \frac{3}{4}$,  asymmetric orbits will be present. An example of the phase portrait of a positive asymmetric solution is in Fig.\ \ref{fig3:subfig1}. For this case, it can be calculated that $P\geq1$. The profile shown in Fig.\ \ref{fig1_num}(c) with its phase-portrait depicted in Fig.\ \ref{fig1_num_pp}(c) belongs to this class of asymmetric solutions. As we know numerically that asymmetric positive solutions are stable (see Fig.\ \ref{fig1_num}), it is expected that Theorem \ref{th:ckrtj88} should not apply here. It is then interesting to know whether the stability can be established analytically.

Numerically (not shown here) we observed that $P=1$ for the asymmetric positive state. If one could show analytically that $P=1$ indeed, one would be able to prove the stability using the theorem presented in \cite{tran92b}. Nonetheless, the solution stability is expected and predicted using the method of \cite{mitc93}.

Excited asymmetric orbits where $P= 2$ and $Q = 1$ can occur for all configurations of $V$ and $w$. Such an orbit is described by the following properties. If
$\frac{\w}{V} < \frac{3}{2}$, then for the point of departure $(u_0, y_0)$, we have that $u_0 < \sqrt{\frac{V}{2}}$ and $y_0 < 0$. The orbit then travels through the $y$ axis in the phase plane and reconnects with the outer orbit at the point $(u_1,y_1)$ where $u_1 < 0$ and $y_1 > 0$.  Moreover, in this configuration of $V$
and $\w$, the only type of inner orbit that can produce such an asymmetric solution happens to be periodic and symmetric about the line $y = u$ in the phase plane. Thus $(u_1,y_1) = (-u_0,-y_0)$, and the original vector tangent to the homoclinic orbit ($\beta^O$) gets mapped by the variational flow of the inner orbit to the tangent space of the homoclinic orbit at the point $(u_1, y_1)$. Thus we know
\begin{equation*}
\lim_{x \to \infty} \frac{\Phi^O(\Phi^I(\beta^O)(u_1,y_1))(u,y)}{|\Phi^O(\Phi^I(\beta^O)(u_1,y_1))(u,y)|} = \left(\begin{array}{ccc} -k \\ k \sqrt{V-\w } \end{array}\right).
\end{equation*}
\noindent That is the limit under the variational flow of the vector which was originally tangent to the homoclinic orbit is the stable subspace of the variational flow at the
critical point at the origin. We can therefore only conclude that $P$ is exactly $2$.  If we are  in the case where
$\frac{\w}{V} > \frac{3}{2}$, and $0<u_0 < \sqrt{\frac{V}{2}}$, then we can exploit the same symmetry conditions as above for such asymmetric solutions, even
concluding the limit of the variational solution under the tangent flow as the stable subspace of the critical point at the origin. An example of one such orbit is in figure \ref{fig3:subfig2}, which is a sketch of the profile shown in Figs.\  \ref{fig4_num}(d) and (e). If however, $\frac{\w}{V} > \frac{3}{2}$ and
$u_0 > \sqrt{\frac{V}{2}}$, then no such periodic orbits exist, i.e. there is no connected inner orbit where $(u_0, y_0)$ and $(u_1,y_1)$ could have the aforementioned properties.

\section{Summary}

We have considered a nonlinear Schr\"odinger equation with a non-uniform nonlinearity coefficient. In particular, we have investigated a Schr\"odinger equation with self-defocusing nonlinearity bounded by self-focusing one. The present work extended the results of \cite{tran92b,mitc93,{jleon04}}. We have established analytically the instability of symmetric states beyond a critical norm through the application of a topological argument developed in \cite{ckrtj88}. Even though the technique does not definitively establish instability in the case of the asymmetric states considered in this paper, we have numerically established the stability of some positive asymmetric states. The analytical (in)stability of asymmetric higher-order modes remains an open problem, which is proposed to be studied in the future.

\appendix
\section{Phase portraits}

In Fig.\ \ref{fig1_num}, the solution profiles corresponding to points A--D in Fig.\ \ref{fig2_num} are shown in the physical space. The corresponding phase-portraits of the solutions are presented in Fig.\ \ref{fig1_num_pp}. The phase-portraits of solutions shown in Fig.\ \ref{fig4_num} are depicted in Fig.\ \ref{fig4_num_pp}.

\begin{figure}[tbhp!]
\begin{center}
\subfigure[]{\includegraphics[scale=0.45]{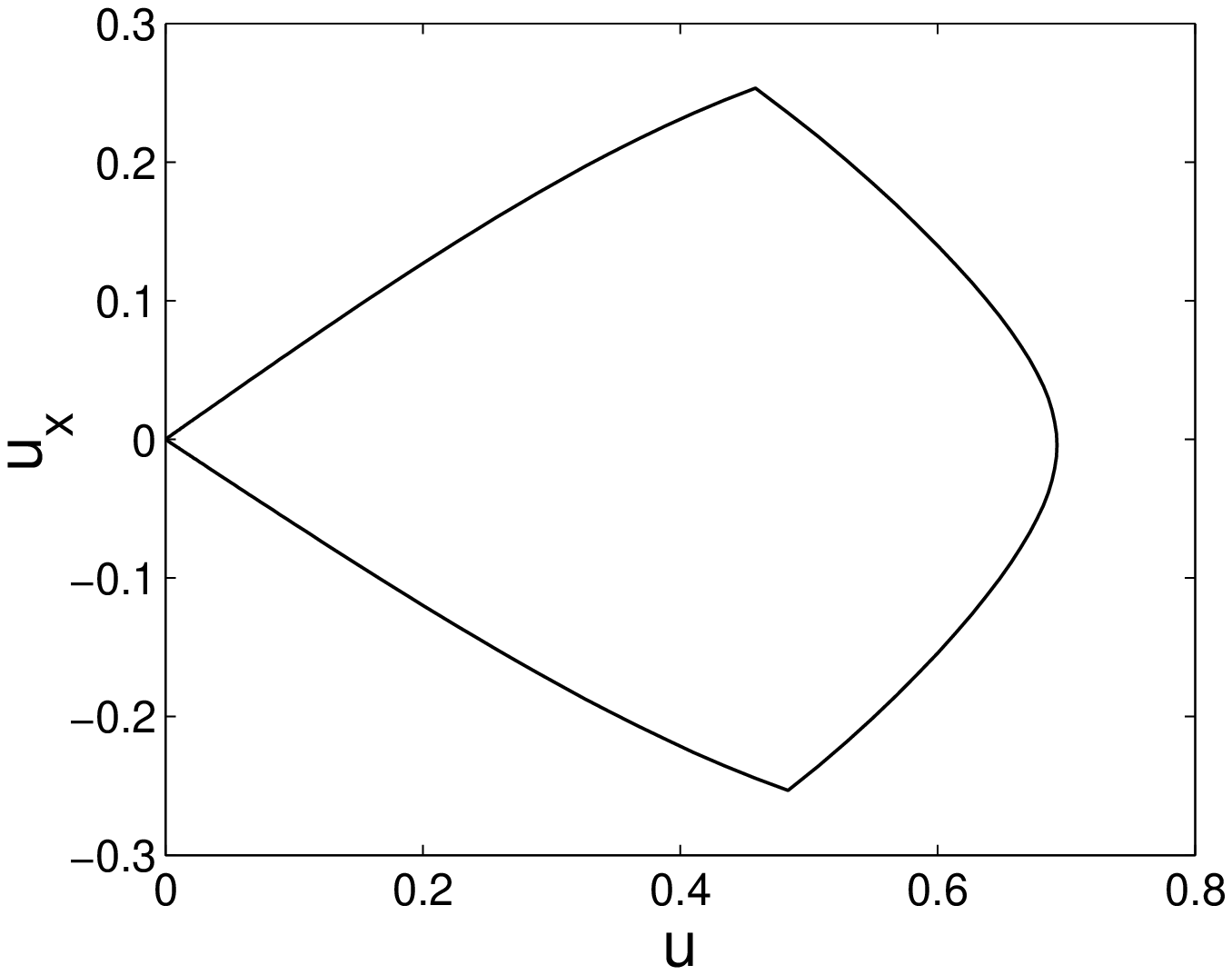}}
\subfigure[]{\includegraphics[scale=0.45]{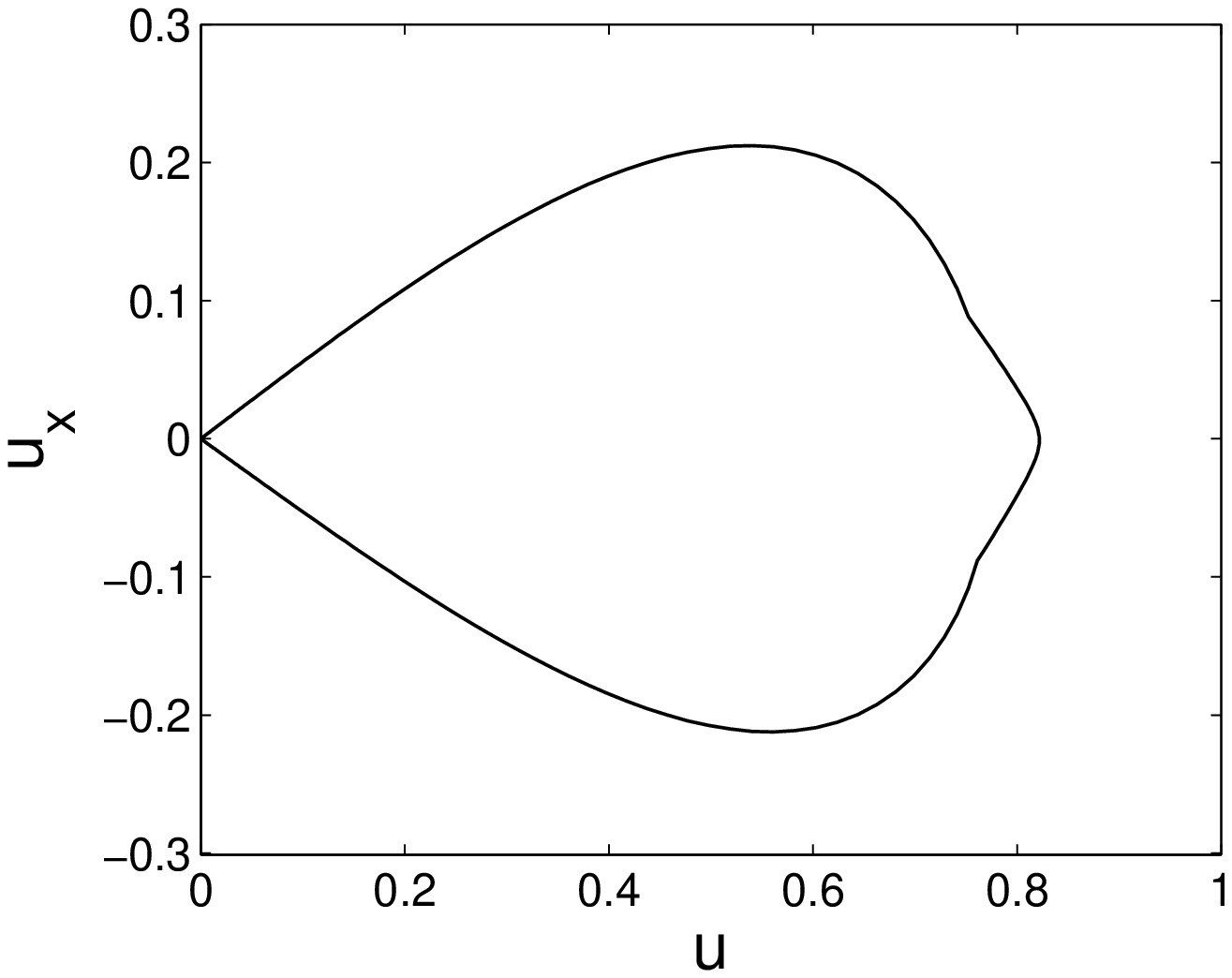}}
\subfigure[]{\includegraphics[scale=0.45]{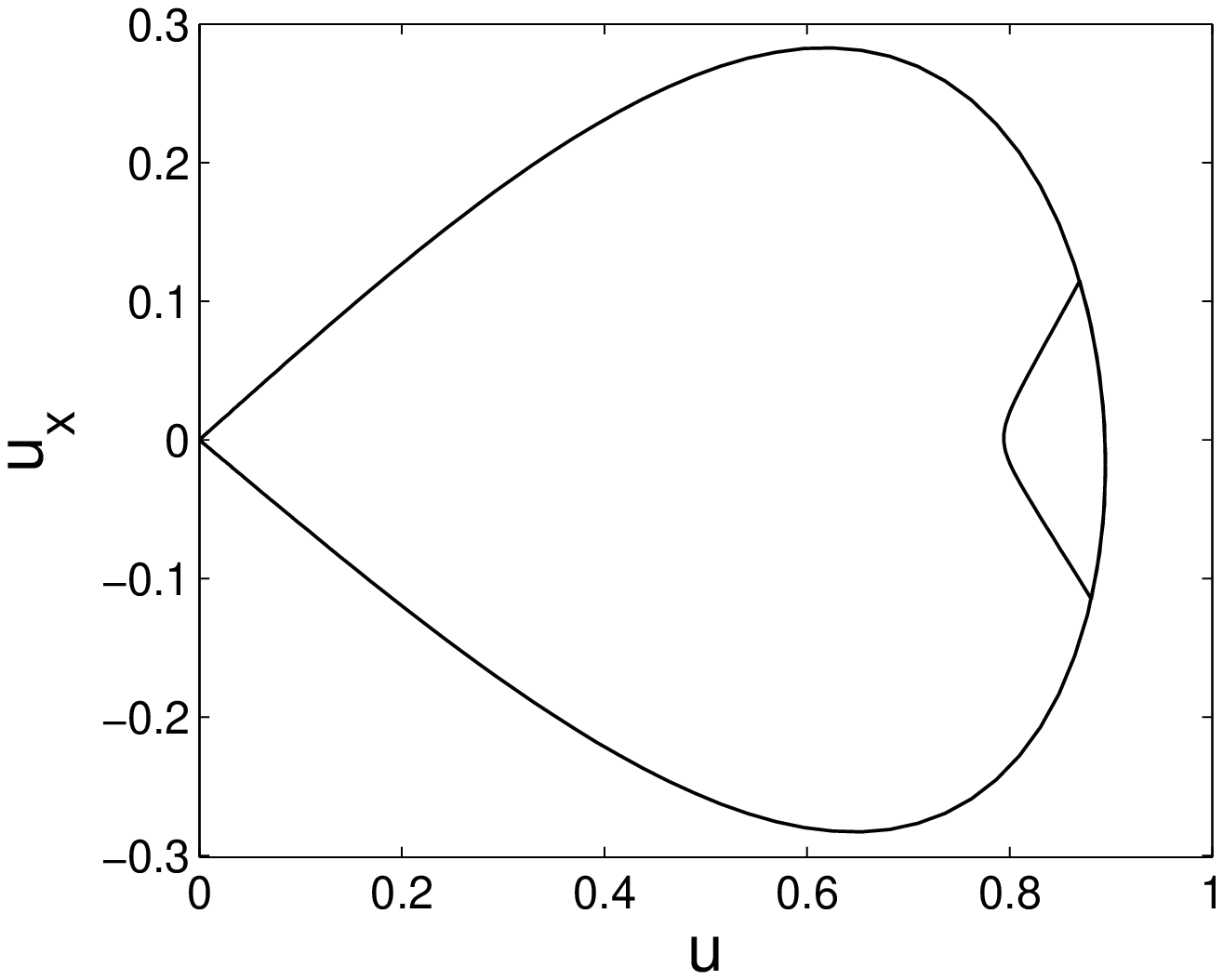}}
\subfigure[]{\includegraphics[scale=0.45]{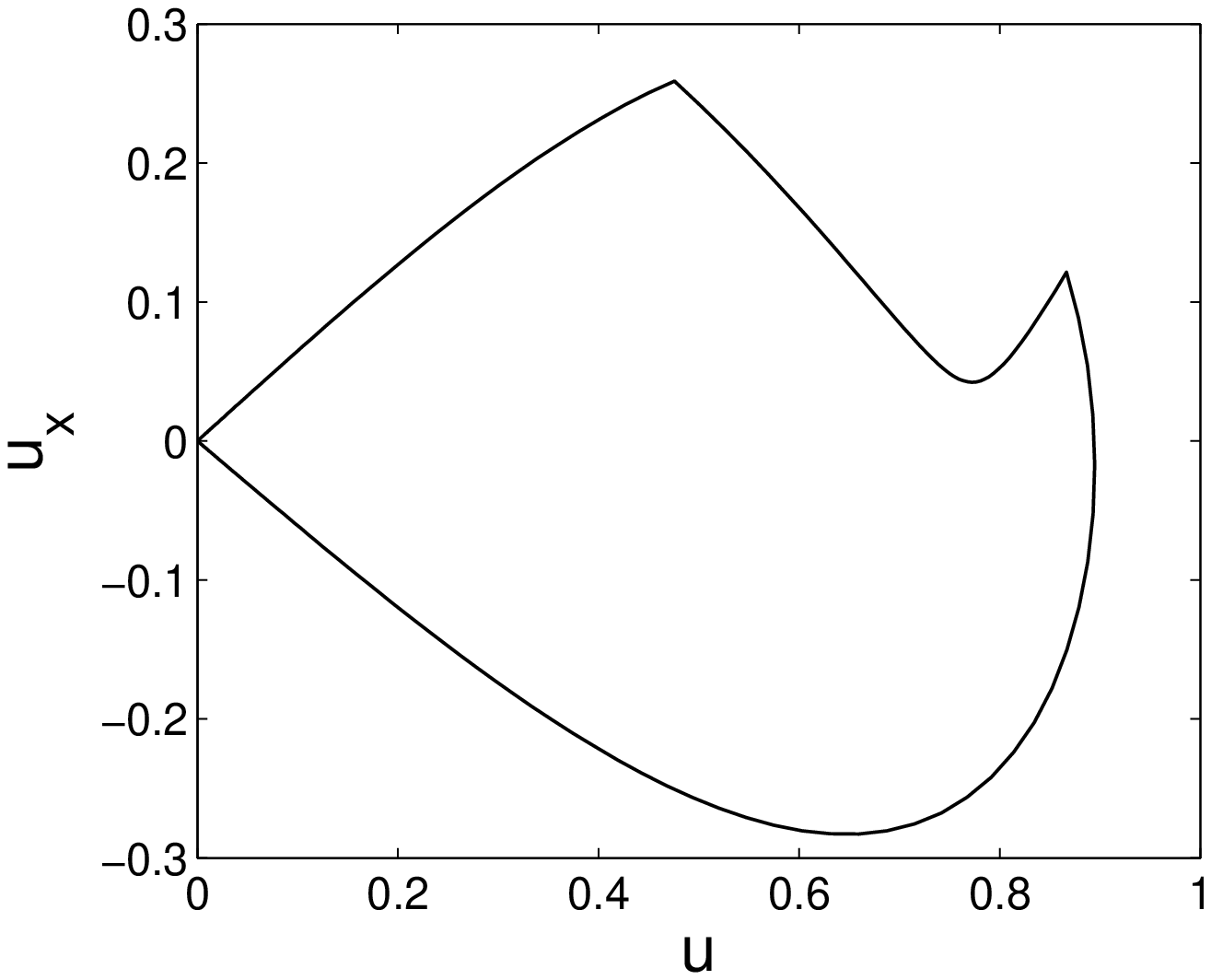}}
\caption{Phase portraits of solutions at points indicated as A--D in Fig.\ \ref{fig2_num} (see also Fig.\ \ref{fig1_num}).}
\label{fig1_num_pp}
\end{center}
\end{figure}

\begin{figure}[tbhp!]
\begin{center}
\subfigure[]{\includegraphics[scale=0.45]{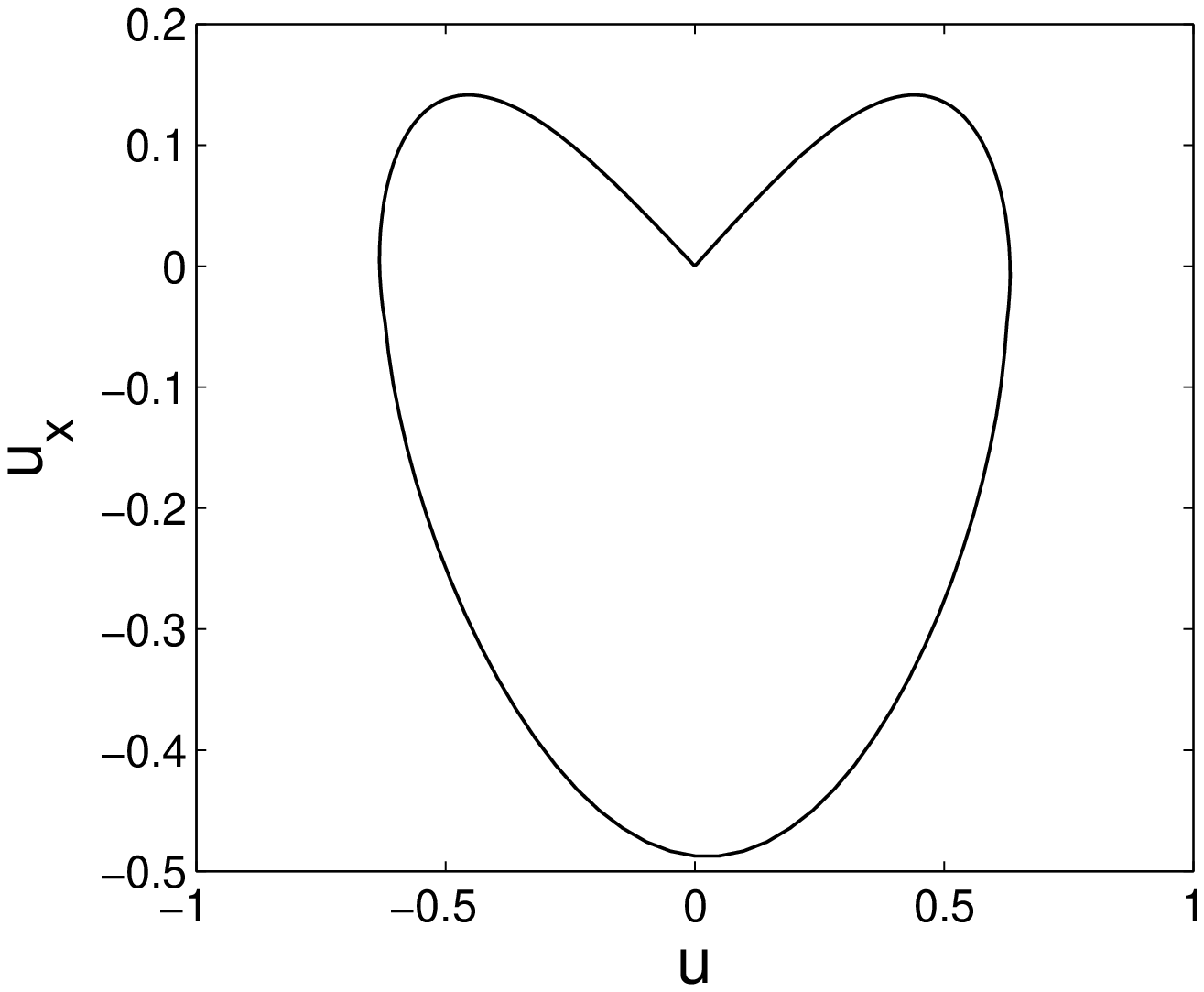}}
\subfigure[]{\includegraphics[scale=0.45]{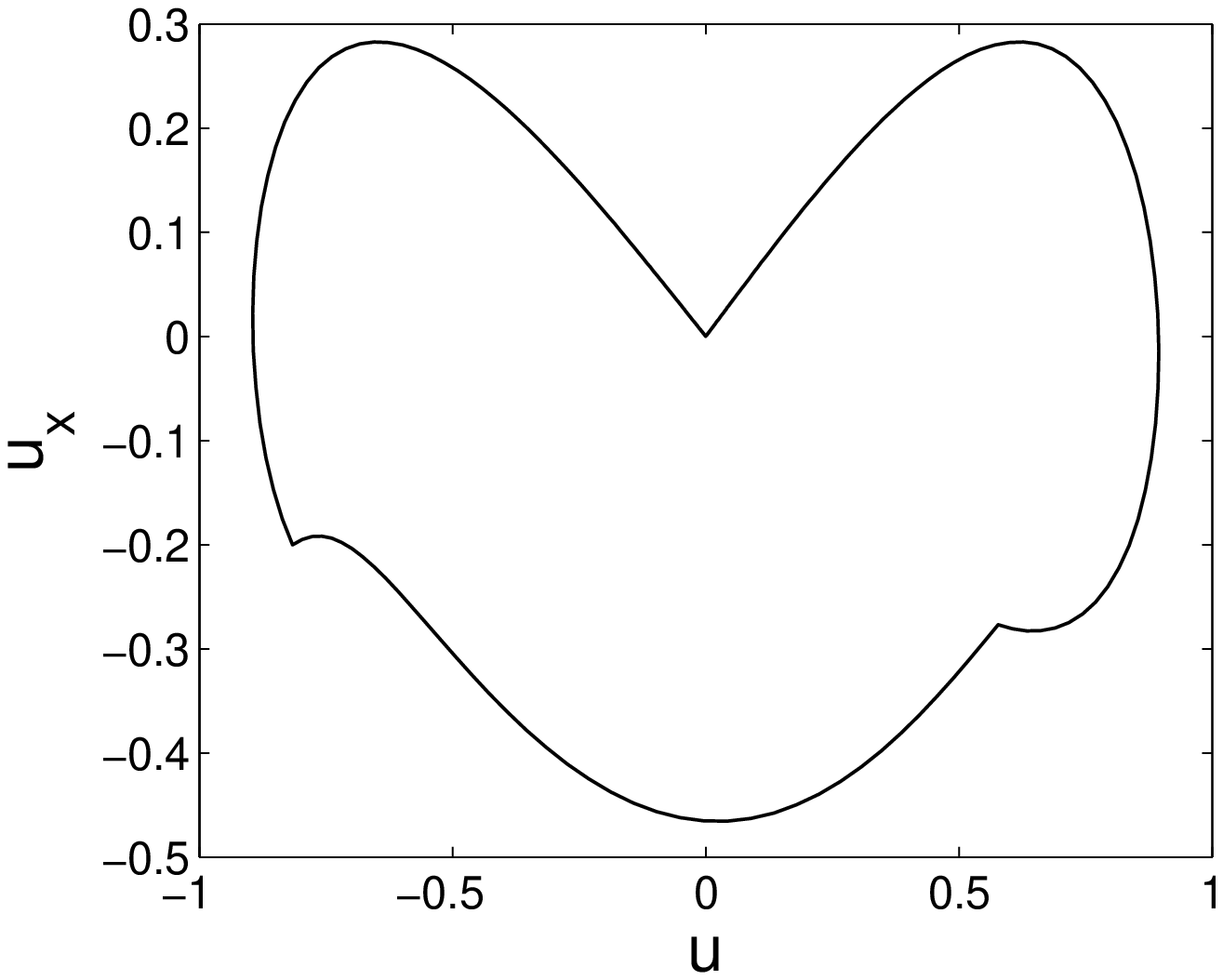}}
\subfigure[]{\includegraphics[scale=0.45]{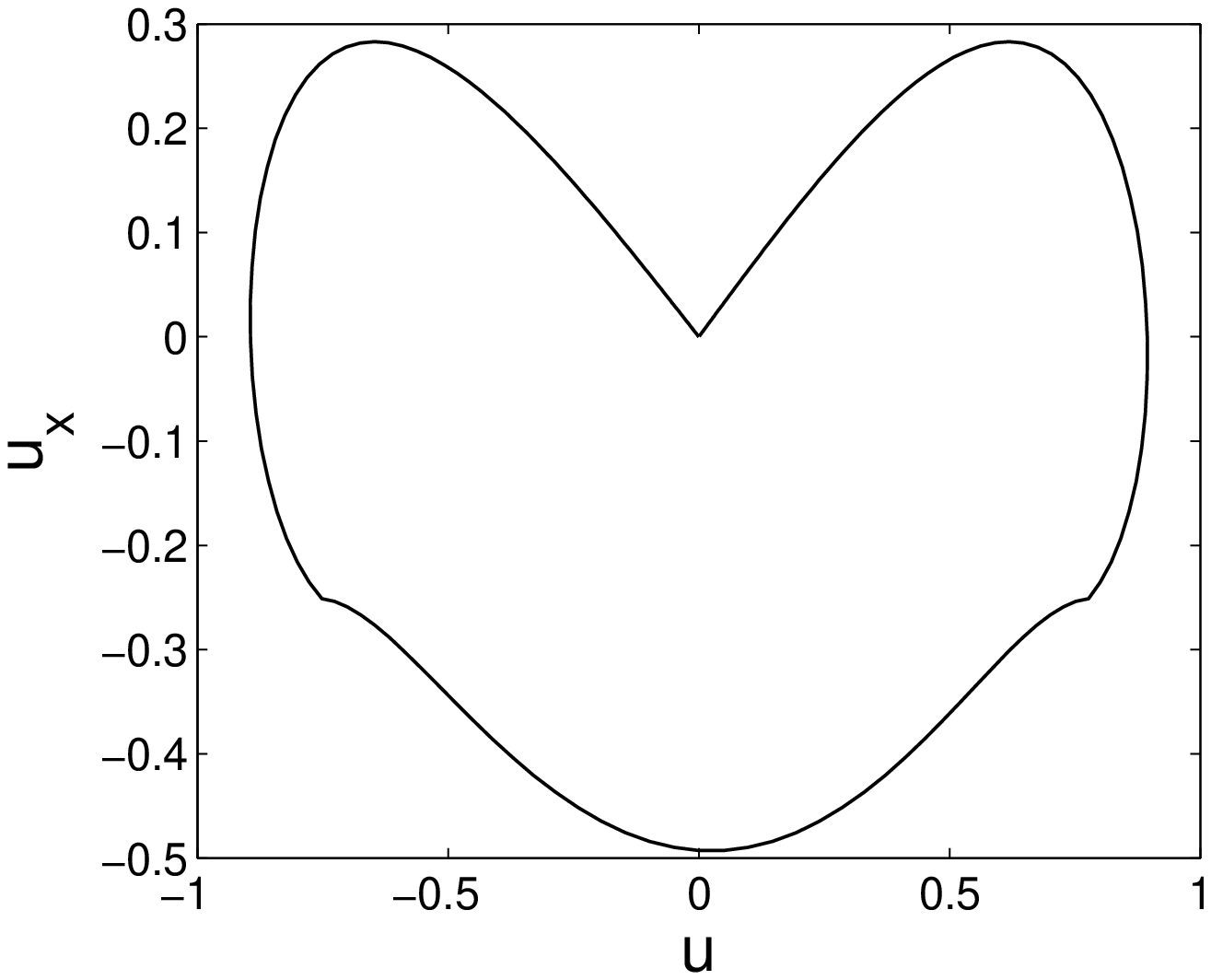}}
\subfigure[]{\includegraphics[scale=0.45]{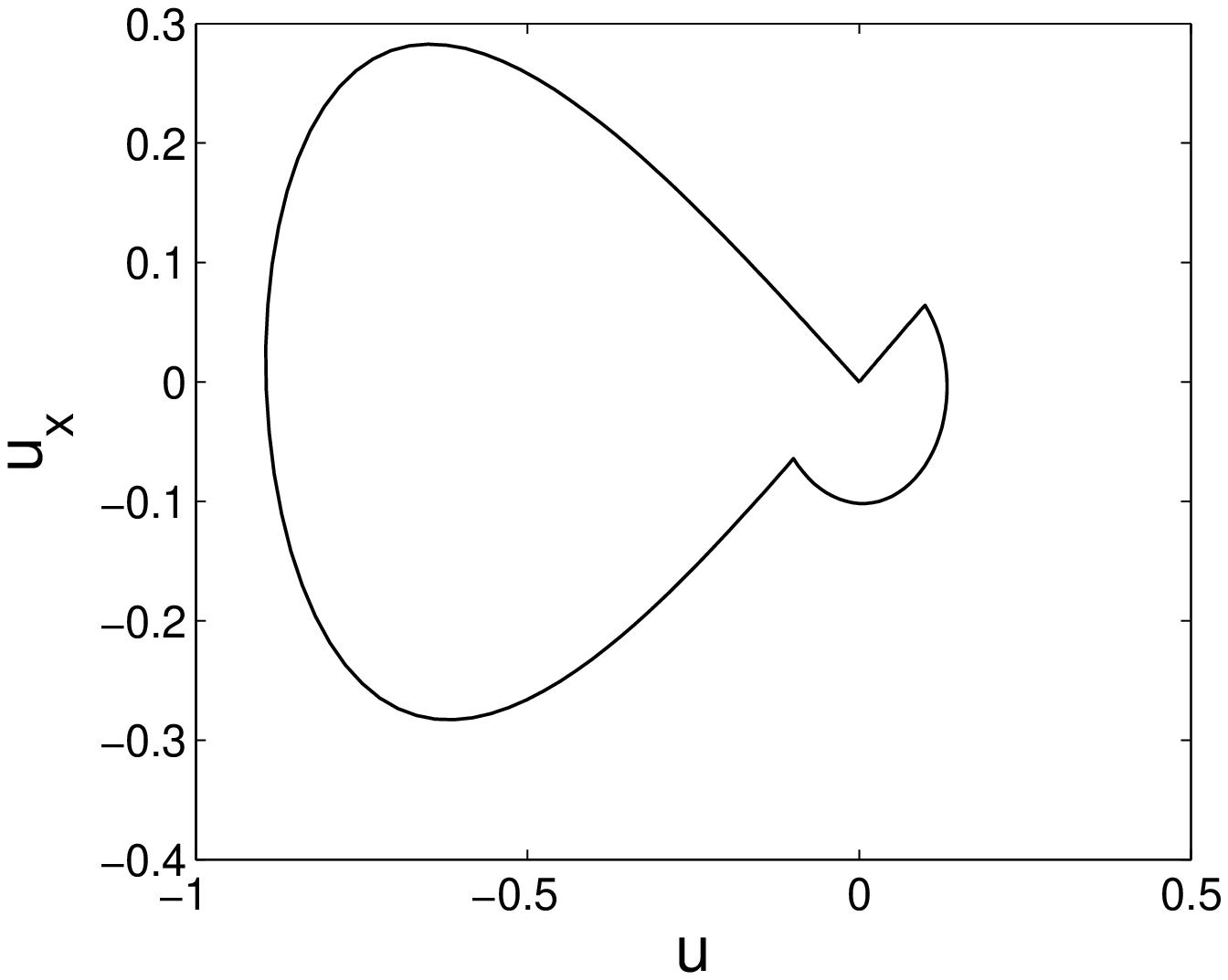}}
\subfigure[]{\includegraphics[scale=0.45]{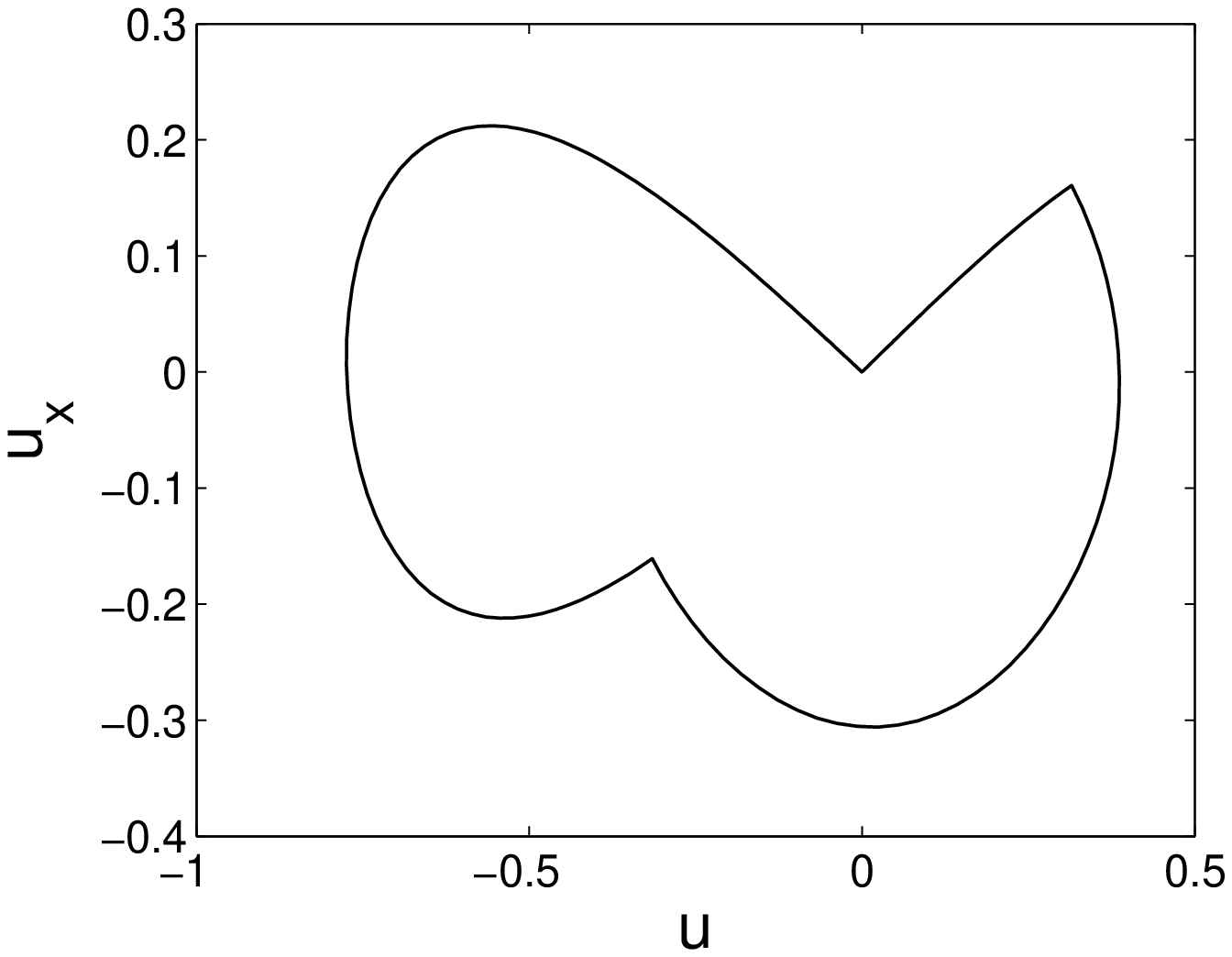}}
\caption{The same as Fig.\ \ref{fig1_num_pp} for solutions at points indicated as A--D in Fig.\ \ref{fig3_num} (see also Fig.\ \ref{fig4_num}). }
\label{fig4_num_pp}
\end{center}
\end{figure}

\newpage
\section*{References}

\end{document}